\documentclass{article}

\usepackage{microtype}
\usepackage{graphicx}
\usepackage{subcaption}
\usepackage{booktabs} 

\usepackage{hyperref}
\usepackage{float}

\usepackage{booktabs}
\usepackage{tabularx}
\usepackage{enumitem}

\usepackage[preprint]{icml2026}

\usepackage{amsmath}
\usepackage{amssymb}
\usepackage{mathtools}
\usepackage{amsthm}

\usepackage[capitalize,noabbrev]{cleveref}

\theoremstyle{plain}
\newtheorem{theorem}{Theorem}
\newtheorem{proposition}[theorem]{Proposition}
\newtheorem{lemma}[theorem]{Lemma}

\theoremstyle{definition}
\newtheorem{definition}{Definition}

\theoremstyle{remark}
\newtheorem{remark}{Remark}

\usepackage{tikz}
\usetikzlibrary{arrows.meta,positioning,fit,shapes.multipart}

\newcommand{\E}{\ensuremath{\mathbb{E}}} 

\newcommand{\lin}{\mathrm{lin}}

\newcommand{\indep}{\perp\!\!\!\!\perp}

\usepackage[textsize=tiny]{todonotes}

\icmltitlerunning{The Subjectivity of Monoculture}

\begin{document}

\twocolumn[
  \icmltitle{The Subjectivity of Monoculture}



  \icmlsetsymbol{equal}{*}

  \begin{icmlauthorlist}
    \icmlauthor{Nathanael Jo}{mit}
    \icmlauthor{Nikhil Garg}{cornell}
    \icmlauthor{Manish Raghavan}{mit}
  \end{icmlauthorlist}

  \icmlaffiliation{mit}{MIT, Cambridge, USA}
  \icmlaffiliation{cornell}{Cornell Tech, New York, USA}
  \icmlcorrespondingauthor{Nathanael Jo}{nathanjo@mit.edu}

  \icmlkeywords{Machine Learning, ICML}

  \vskip 0.3in
]



\printAffiliationsAndNotice{}  

\begin{abstract}
Machine learning models -- including large language models (LLMs) -- are
often said to exhibit \textit{monoculture}, where outputs agree strikingly
often. But what does it actually mean for models to agree too much? We argue that this question is inherently subjective, relying on two key decisions.

First, the analyst must specify a baseline null model for what ``independence'' should look like.
This choice is inherently subjective, and as we show, different null models result in dramatically different inferences about excess agreement.
Second, we show that inferences depend on the population of models and items under
consideration. 
Models that seem highly correlated in one context may appear independent when
evaluated on a different set of questions, or against a different set of peers.
Experiments on two large-scale
benchmarks validate our theoretical findings. For example, we find drastically different inferences when using a null model with item difficulty compared to previous works that do not. Together, our results reframe
monoculture evaluation not as an absolute property of model behavior, but as a
context-dependent inference problem.


\end{abstract}

\section{Introduction}\label{sec:introduction}

A growing body of empirical work suggests that AI models, including both predictive and generative models, produce homogeneous outputs. Across domains -- including text/image generation, factual responses, and decision making -- separately trained models often produce strikingly similar responses \cite{wu2025generative, padmakumar2024does, raghavan2024competition, ugander2024art, doshi2024generative, goelgreat2025, kimcorrelated2025, toups2023ecosystem,wenger2025we}. Throughout the literature, a common perspective has emerged: Models don't just agree, they agree \emph{too much}, resulting in ``algorithmic monoculture'' \cite{kleinberg2021algorithmic}. 

Why does monoculture matter? 
In resource allocation settings such as hiring or lending, similar predictive systems may lead institutions to systematically privilege or disadvantage the same groups \cite{jain2024algorithmic}. In market settings, firms that independently deploy similar pricing algorithms may inadvertently sustain higher prices and harm consumers \cite{jo2025homogeneous}. More broadly, diversity is often a desirable property of sociotechnical systems because it can improve robustness to error, reflect a plurality of norms or preferences, and promote creative or novel outputs. A large literature has emerged analyzing the downstream consequences of algorithmic monoculture
\cite{bommasani2022pickingpersondoesalgorithmic,creel2022algorithmic,peng2024monoculture, peng2024wisdomfoolishnessnoisymatching,baek2025hiring}.

However, claims about monoculture are by nature relative, because it suggests that outputs are more correlated than one would expect under some ideal world. As we will show, this comparison cannot be made in a vacuum. Instead, it requires two subjective choices made by the analyst, and these choices drive the conclusions they reach about monoculture:

\begin{enumerate} [itemsep=0.25em, topsep=0pt, parsep=0pt]
    \item What is the baseline (null model) against which agreement is measured?
    \item On what population of models and instances do we measure excess agreement?
\end{enumerate}


\paragraph{Baselines for excess agreement.}

The literature contains two distinct approaches to constructing baselines against which excess agreement is measured.
One is to use human subjects \cite{wu2025generative, padmakumar2024does, ugander2024art, doshi2024generative,anderson2024homogenization}. While human baselines are useful for normative comparisons about diversity, they require committing to a reference population and a notion of human diversity.

A second approach, and one we take here, is to measure excess agreement relative to a ``null model,'' under which responses are conditionally independent \cite{kimcorrelated2025, goreckimonoculture2025, goelgreat2025, toups2023ecosystem}. A null model is a particular data-generating process that creates an expectation of how much agreement is reasonable. Any observed agreement in excess of the null model is potentially evidence of monoculture. A popular null model is capability-based \cite{goelgreat2025, jo2025homogeneous}.  In its simplest form, two models with accuracies $(p_1,p_2)$ are viewed as independent if their expected agreement is the chance overlap $p_1 p_2$ on each item. 

Is this the ``right'' null model? Perhaps---but this baseline might ignore other
relevant features. For example, questions may vary in difficulty: if two models
both succeed on easy items and fail on hard ones, their agreement may
substantially exceed the $p_1p_2$ baseline even in the absence of any meaningful
dependence. Or perhaps some models perform well on math questions, whereas others
excel at writing code. Again, this would lead to excess agreement beyond what
the simple $p_1 p_2$ baseline would predict.
More generally, a null model allows us to express structure in our observations
that we \textit{expect} to see, giving us a baseline against which to measure
\textit{unexpected} correlation.


Complicating things further is that we generally lack ``ground truth''
information about latent structural properties like 
item difficulty. Instead, we must infer them from
data: e.g., the difficult questions are precisely the ones that many models get
wrong.
With a small set of models, one might incorrectly attribute all correlation
to heterogeneous item difficulty, thereby failing to detect any excess correlation.
In other words, latent structural properties like item difficulty or model
capability
cannot be defined in a context-free way; they are defined only relative to
a population of models and items.
As a result, even after fixing a null model, conclusions about excess agreement depend critically on the set of models and questions over which it is fit and evaluated.

\vspace{-1em}
\paragraph{Our contributions.}

We begin in Section~\ref{sec:null_model} by arguing that claims of monoculture are best understood as \textit{comparative analyses}.
Under this view, monoculture is not an absolute property of a dataset, but a discrepancy between observed behavior and a researcher-chosen baseline.

In \Cref{sec:unident_null}, we show that the choice of null model is inherently subjective.
Different reasonable assumptions about some latent structure give rise
to different notions of independence---and hence different conclusions about excess correlation.
Therefore, one must select and defend a reasonable null model grounded on priors and available data. We empirically demonstrate our theoretical findings 
using two large multi-choice benchmarks. In particular, we find drastically different inferences when using a null model with item difficulty compared to previous works that do not.

Finally, in Section~\ref{sec:population}, we show that even after fixing a null model, inference is sensitive to the population of models and
items included in the analysis. For example, when the models are highly homogeneous, it may be difficult to isolate true correlation beyond what the null might explain.



Together, these results clarify that monoculture claims are jointly shaped by the choice of null model \emph{and} the evaluative context in which it is fit.
We discuss the implications of our work for future research and analysis in Section~\ref{sec:discussion}. We also provide an extended related work in Appendix~\ref{appsec:related_work}.


\section{The Null Model of Independence}\label{sec:null_model}

In this section, we formalize what it means to assess monoculture relative to an independent baseline. We study a setting in which models answer $n$ independent items, and we ask whether the observed data can be explained by ``independent'' model behavior. To do so, we introduce a \emph{null model of independence}: a family of joint distributions in which all dependence arises through latent parameters. 

\subsection{Setup}
We observe, for each item $i \in \{1,\dots,n\}$ and each model $j \in \{1,\dots,m\}$, an output $Y_{ij}$ with values in an output space $\mathcal Y$, taken from some unknown probability distribution $P$. Throughout the paper, we instantiate our theory in the simplest setting, $\mathcal Y = \{0,1\}$, where $Y_{ij}$ encodes correctness. This choice allows us to highlight the core identification issues without additional complexity, as well as reflecting the paradigm others have studied.\footnote{For example, \citet{kimcorrelated2025, goelgreat2025} use information about multiple choice answers. In this case, binary correctness is at the answer choice level (i.e., correctly choosing the right answer, or correctly not choosing the wrong answer).} We show how to generalize our framework to other output spaces in Appendix~\ref{appsec:generalizations}.

\subsection{Monoculture with respect to a null}
We argue that claims of ``monoculture'' are only meaningful relative to a chosen \emph{null model} of independence. In the binary setting $\mathcal Y=\{0,1\}$, each model output $Y_{ij}$ is a Bernoulli random variable. Let $\Theta$ be an arbitrary parameter space encoding such latent factors (e.g., item properties, model abilities, or both). For each $\theta \in \Theta$, define a \emph{product-Bernoulli model} on $\{0,1\}^m$ by
\[
\vspace{-0.5em}
P_\theta(y_1,\ldots,y_m)
\;:=\;
\prod_{j=1}^m p_{\theta j}^{\,y_j} (1-p_{\theta j})^{1-y_j},
\vspace{-0.1em}
\]
where each $p_{\theta j} \in [0,1]$ is the conditional success probability of model $j$ under parameter $\theta$. A null model is the family of distributions $\mathcal{P}_{\text{null}} := \{P_\theta: \theta \in \Theta\}$, and a joint law $P$ is consistent with the null model if it lies in this set. In other words, under the null model, all inter-model agreement can be explained by the shared latent parameter $\theta$ together with \textbf{conditional independence across models}.

\section{Subjectivity of Null Model}\label{sec:unident_null}
In general, there are numerous choices of $\Theta$, many of which are ``reasonable'' null models.
As in the example in Section~\ref{sec:introduction}, a simpler null model that only accounts for accuracy might imply a higher observed correlation than if we additionally included item difficulties. In other words, expanding $\Theta$ enlarges $\mathcal{P}_{\text{null}}$.
In the extreme, if $\Theta$ is allowed to be arbitrarily rich, then $\mathcal{P}_{\text{null}}$ can express any distribution on $\{0,1\}^m$. The
next theorem formalizes this observation.\footnote{Note that this finite mixture representation is a simple instance of a more general phenomenon: exchangeable or symmetric laws can often be written as mixtures of independent structures, as in de Finetti-type theorems.} All proofs can be found in Appendix~\ref{appsec:thm_proofs}.

\begin{theorem}\label{thm:mixture}
For any probability distribution $P$ on $\{0,1\}^m$, there exists a probability measure $H$ on $[0,1]^m$ and a latent vector $P_i=(P_{i1},\ldots,P_{im})\sim H$ such that, conditional on $P_i$, the coordinates are independent Bernoulli:
\[
Y_{ij}\mid P_i \ \indep \ (j=1,\ldots,m),\qquad \mathbb P(Y_{ij}=1\mid P_i)=P_{ij},
\]
and the unconditional distribution of $Y_i$ equals $P$:
\[
\mathbb P(Y_i=y)\;=\;\int \prod_{j=1}^m p_j^{y_j}(1-p_j)^{1-y_j}\,dH(p_1,\ldots,p_m).
\]
Moreover, there exists a discrete $H$ supported on at most $2^m$ points with this property.
\end{theorem}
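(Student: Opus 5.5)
The plan is an explicit construction: put point masses at the vertices of the cube $[0,1]^m$. For each $y\in\{0,1\}^m$, view $y$ itself as a point of $[0,1]^m$ and define
\[
H \;:=\; \sum_{y\in\{0,1\}^m} P(y)\,\delta_{y},
\]
which is a probability measure on $[0,1]^m$ because $P$ is a probability distribution on $\{0,1\}^m$. We then draw the latent vector $P_i\sim H$. Conditional on $P_i=p$, we draw $Y_{i1},\dots,Y_{im}$ independently with $Y_{ij}\sim\mathrm{Bernoulli}(p_j)$. This gives the conditional independence and the identity $\mathbb P(Y_{ij}=1\mid P_i)=P_{ij}$ by construction. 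Since $H$ has at most $2^m$ atoms (exactly the support of $P$), the discreteness claim follows immediately once the mixture identity is checked.

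The key step is that product-Bernoulli laws at vertices are degenerate. For $p=z\in\{0,1\}^m$ we have
\[
\prod_{j=1}^m z_j^{y_j}(1-z_j)^{1-y_j}=\mathbf 1\{y=z\},
\]
using the convention $0^0=1$. Each factor equals $1$ if $y_j=z_j$ and $0$ otherwise. Integrating against $H$ then gives
\[
\int \prod_j p_j^{y_j}(1-p_j)^{1-y_j}\,dH(p)=\sum_z P(z)\,\mathbf 1\{y=z\}=P(y).
\]
This is exactly the required mixture representation. Formally, the existence of a probability space carrying $(P_i,Y_i)$ with these conditional laws follows by defining the joint law on the finite set $\mathrm{supp}(H)\times\{0,1\}^m$ through the product of the mass function of $H$ and the conditional product kernel. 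No measure-theoretic subtlety arises, because everything is finite.

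There is no real obstacle here. The only point needing care is the $0^0=1$ convention, and making explicit that a Bernoulli with parameter $0$ or $1$ counts as a legitimate independent coordinate, which it does since degenerate variables are independent of everything. I will also remark that the construction shows the null family becomes vacuous when $\Theta$ is unrestricted: $\theta$ can simply encode the outcome itself.
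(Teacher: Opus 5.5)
Your proposal is correct and is essentially the paper's proof. The paper also places mass $P(y)$ at each vertex $y\in\{0,1\}^m$, notes that the product-Bernoulli kernel at a vertex reduces to the indicator $\mathbf 1\{y'=y\}$, and integrates to recover $P(y)$; your remarks on the $0^0=1$ convention and on building the joint law on a finite space just make explicit what the paper leaves implicit.
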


In other words, there exists a sufficiently expressive null that makes the data look independent conditional on the model parameters $\Theta$. From hereon in this section, we suppress $\Theta$ and work directly with abstract families of distributions. All results apply uniformly over any choice of $\Theta$.


\subsection{Absorption of Excess Correlation}
\label{sec:absorption}



Theorem~\ref{thm:mixture} hints at the central tension of measuring monoculture: \textbf{the more structure we allow the null model to capture, the less apparent correlation remains to be explained.} 
If the null is too simple, it under-explains the data and makes monoculture appear everywhere; if the null is too rich, it explains away all dependence and makes monoculture impossible to detect. To formalize this idea, we introduce a ``null ladder’’ in Definition~\ref{def:null_ladder}: a nested sequence of
increasingly expressive null models.  As we climb the ladder, the null becomes more permissive and the gap
between the observed law $P$ and the best-fitting null model must shrink. In Section~\ref{sec:empirical-mirt-absorption}, we will instantiate our experiments with two different null ladders with intuitive interpretations.

\begin{definition}[Null ladder]\label{def:null_ladder}
Let $\mathcal P$ be the set of all laws on $\{0,1\}^m$.
A \emph{null ladder} is a nested sequence $(\mathcal N_K)_{K\ge1}$ with
\[
\mathcal N_1 \subseteq \mathcal N_2 \subseteq \cdots \subseteq \mathcal N_K \subseteq \mathcal N_{K+1} \subseteq \cdots \subseteq \mathcal P.
\]
\end{definition}
\begin{definition}[Excess at level $K$]\label{def:excess_k}
Let $D:\mathcal P\times\mathcal P\to[0,\infty)$ be a discrepancy with $D(P,Q)\ge0, D(P,P)=0$, such as TV distance or $f$-divergence.
For $P\in\mathcal P$, define the \emph{minimal excess} at level $K$ and the \emph{best-fitting null}:
\[
Q_K^\star \in \arg\inf_{Q\in \mathcal N_K} D(P,Q), \quad E^P_{\min}(K)\ :=\  D(P,Q^\star_K).
\vspace{-0.5em}
\]
When the infimum is not attained, let
$(Q_K^\star)_{K\ge1}$ denote any sequence in $\mathcal N_K$ satisfying
$D(P,Q_K^\star)\to E^P_{\min}(K)$.
\end{definition}

The quantity $E^P_{\min}(K)$ measures how much of $P$ remains unexplained at the $K$th
level of the ladder. With this quantity, we can now state our next result, which simply formalizes the intuition that a more expressive model can only explain \emph{more} of the data, never less.

\begin{proposition}
For any null ladder $(\mathcal N_K)$ and any $P\in\mathcal P$,
\[
E^P_{\min}(1) \ge\ \cdots\ \ge\ E^P_{\min}(K)\ \ge\ E^P_{\min}(K+1)\ \ge \cdots
\]
\label{prop:monotone_absorption}
\end{proposition}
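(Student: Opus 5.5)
The plan is to read $E^P_{\min}(K)$ as the infimum $\inf_{Q\in\mathcal N_K} D(P,Q)$. This is what Definition~\ref{def:excess_k} intends: when the infimum is attained it equals $D(P,Q_K^\star)$, and otherwise it is the limit of $D(P,Q_K^\star)$ along the approximating sequence. With this reading the statement reduces to a single fact: taking an infimum over a larger set can only make it smaller. Nothing about the product-Bernoulli structure or Theorem~\ref{thm:mixture} is needed. The ordering follows purely from the nesting in Definition~\ref{def:null_ladder}, together with $D\ge 0$, which guarantees that every infimum is a well-defined number in $[0,\infty)$.

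First, fix $K\ge 1$ and take any $Q\in\mathcal N_K$. Because the ladder is nested, $\mathcal N_K\subseteq\mathcal N_{K+1}$, so $Q$ also lies in $\mathcal N_{K+1}$. By the definition of the infimum this gives
\[
\inf_{Q'\in\mathcal N_{K+1}} D(P,Q') \;\le\; D(P,Q).
\]
Second, this bound holds for every $Q\in\mathcal N_K$, so I take the infimum of the right-hand side over $\mathcal N_K$. The result is $E^P_{\min}(K+1)\le E^P_{\min}(K)$. Finally, applying this one-step inequality for each $K$ and chaining the results gives the full monotone sequence in the proposition.

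The only real obstacle is bookkeeping for the case where the infimum is not attained. Here Definition~\ref{def:excess_k} works with a sequence $Q_K^\star$ instead of a minimizer. I would note explicitly that any such sequence has $D(P,Q_K^\star)$ converging to the infimum, so $E^P_{\min}(K)$ is exactly the infimum value in either case. The argument above never uses a minimizer, so it covers both cases uniformly. I would also remark that $\mathcal N_K$ should be nonempty so that each infimum is finite. If some $\mathcal N_K$ were empty, the convention $\inf\emptyset=+\infty$ would still preserve the ordering.
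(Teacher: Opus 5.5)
Your proof is correct and matches the paper's reasoning: because $\mathcal N_K\subseteq\mathcal N_{K+1}$, the infimum of $D(P,\cdot)$ over the larger class cannot be larger, which the paper states in the main text as enlarging the feasible class cannot increase the optimal discrepancy. The appendix block titled as this proposition's proof actually contains the total-variation argument for Theorem~\ref{thm:residual-dependence-vanishes}, so the paper has no separate written proof. Your reading of $E^P_{\min}(K)$ as $\inf_{Q\in\mathcal N_K}D(P,Q)$ also handles the non-attained case, which the paper's definition leaves ambiguous.
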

\vspace{-2em}

Proposition~\ref{prop:monotone_absorption} focuses on the global discrepancy between $P$ and the null, rather than \textit{cross-model correlations} unexplained by the null, which we formalize next. Theorem~\ref{thm:residual-dependence-vanishes} shows that, provided $P$ can be approximated arbitrarily well along the ladder, these residual covariances vanish as $K$ increases. In other words, a sufficiently rich null model can always reinterpret what appears to be cross-model correlation as arising from latent structure rather than monoculture. We will give an example of such a null ladder next.

\begin{theorem}
\label{thm:residual-dependence-vanishes}
Fix a parametrized null ladder $(\mathcal N_K)_{K \geq 1}$ and let $Q_K^\star$ be a best-fitting null law at level $K$ as in Definition~\ref{def:excess_k}.
Suppose that $D(P,Q_K^\star)\xrightarrow[K\to\infty]{}0,$ where $D$ is any discrepancy that controls expectations of bounded functions,
such as total variation.

For each $K$, let $p^{(K)}_{ij} \ :=\ \Pr_{Q_K^\star}(Y_{ij}=1)$ denote the marginal success probabilities of item $i$ and model $j$ under $Q_K^\star$. Then, for every fixed pair of distinct models $j \neq \ell$,
\vspace{-1em}

\[
\left|\frac{1}{n} \sum_{i = 1}^n\E_{P}\!\Big[(Y_{ij}-p^{(K)}_{ij})(Y_{i\ell}-p^{(K)}_{i\ell})\Big]\right|
\ \xrightarrow[K\to\infty]{}\ 0.
\]
\end{theorem}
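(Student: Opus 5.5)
The approach is to compare the $P$-expectation with the $Q_K^\star$-expectation of the same bounded function, and to use that under $Q_K^\star$ the residual cross-moment vanishes. The null law is a mixture of product-Bernoulli laws, so each $P_\theta$ factorizes across models. For the per-item residual to vanish, we need the conditional success probabilities to coincide with the marginal ones, $p_{\theta,ij} = p^{(K)}_{ij}$ almost surely. In other words, we interpret the ladder's parametrization as one in which the item $i$ carries its own latent parameter $\theta_i$, and $p^{(K)}_{ij}$ is the success probability given that item-level parameter. This is the reading suggested by the paper's examples with item difficulty. Under this reading, for each $K$ and each item $i$, conditional independence gives
\[
\E_{Q_K^\star}\big[(Y_{ij}-p^{(K)}_{ij})(Y_{i\ell}-p^{(K)}_{i\ell})\big]
= \E_{Q_K^\star}\big[Y_{ij}-p^{(K)}_{ij}\big]\,\E_{Q_K^\star}\big[Y_{i\ell}-p^{(K)}_{i\ell}\big] = 0,
\]
because $p^{(K)}_{ij}$ is exactly the $Q_K^\star$-mean of $Y_{ij}$.

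Next, for fixed $i, j, \ell$, define $f^{(K)}_i(y) := (y_{ij}-p^{(K)}_{ij})(y_{i\ell}-p^{(K)}_{i\ell})$. Since all quantities lie in $[0,1]$, this function satisfies $|f^{(K)}_i| \le 1$ uniformly in $K$. The hypothesis that $D$ controls expectations of bounded functions means there is a modulus, for total variation simply $|\E_P f - \E_Q f| \le 2\|f\|_\infty\,\mathrm{TV}(P,Q)$. Combining this with the identity above gives
\[
\Big|\E_P\big[f^{(K)}_i\big]\Big| = \Big|\E_P\big[f^{(K)}_i\big] - \E_{Q_K^\star}\big[f^{(K)}_i\big]\Big| \le 2\,D(P,Q_K^\star).
\]
Here $P$ and $Q_K^\star$ are read as laws on the full array $\{0,1\}^{n\times m}$, or item by item with $D$ summed or maximized over the $n$ items. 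Averaging over the fixed, finite set of $n$ items and applying the triangle inequality bounds the quantity in the theorem by $2D(P,Q_K^\star)$, which tends to $0$ by hypothesis.

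The main obstacle is the first step, namely ensuring that the null-implied residual cross-moment is exactly zero. If $Q_K^\star$ were a genuine mixture over a latent $\theta$ that is not pinned down at the item level, then the marginal $p^{(K)}_{ij}$ would differ from the conditional probability. In that case $\E_{Q_K^\star}[f_i]$ would equal $\mathrm{Cov}_{H}(p_{\theta j},p_{\theta\ell})$, which need not vanish. I would resolve this by making the item-level parametrization explicit, so that $Q_K^\star$ restricted to item $i$ is a single product law. In the fully mixed case, the best one can say is that the residual converges to the null-implied covariance rather than to zero.

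A second, minor point is uniformity: $p^{(K)}_{ij}$ changes with $K$, so the bound must hold uniformly over bounded functions. This is exactly what TV-type control provides, and why the hypothesis on $D$ is stated that way.
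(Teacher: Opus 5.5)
Your proof is correct and follows the paper's own argument essentially step for step. The paper defines the same bounded function (averaged over items, $\|g_K\|_\infty\le 1$), notes that its $Q_K^\star$-expectation vanishes by factorization, and applies $|\E_P f-\E_Q f|\le 2\,\mathrm{TV}(P,Q)$ to get the bound $2\,\mathrm{TV}(P,Q_K^\star)\to 0$. Your caveat about mixtures is well taken. The paper's justification ``by definition of the null model'' relies on each null law being a single product-Bernoulli law $P_\theta$ as defined in Section~\ref{sec:null_model}, not a mixture over a latent $\theta$, and that is exactly the reading you make explicit.
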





\subsection{Experiments}
\label{sec:empirical-mirt-absorption}

We now empirically demonstrate Proposition~\ref{prop:monotone_absorption} and Theorem~\ref{thm:residual-dependence-vanishes} by instantiating the null ladder developed in Section~\ref{sec:absorption}. In particular, we consider two different progressions of complexity in the null model (Experiment 1 and 2). We first introduce the item response theory (IRT) model, which will form our null ladder later.

\vspace{-0.5em}
\paragraph{Item Response Theory (IRT).}

We use a popular model of capability from psychometrics: \textit{item response theory} (IRT) \citep{lord1980applications,embretson2000item}. IRT posits that each respondent $j$ possesses a latent ability vector $\theta_j$ and that each item $i$ has parameters $(a_i,b_i)$ describing the item's discrimination and difficulty, respectively. The probability of a correct response is modeled as a strictly increasing function $\sigma$ (e.g., logit or probit). We note that several recent works apply IRT to evaluating LLMs, see Appendix~\ref{appsec:related_work} for details.
Multidimensional IRT \citep{reckase2006multi} extends this framework by letting $\theta_j, a_i$ lie in $\mathbb R^K$. Formally, we have the following null model:
\begin{definition}[K-dimensional IRT]\label{def:mirt_null}
Fix $m\ge1$.  
For $K\in\mathbb N$, let $\mathcal N_K$ be the set of laws on $\{0,1\}^m$ generated by a $K$-dimensional latent ability model:
\[
p_{ij} = \Pr(Y_{ij}=1\mid \theta_j) = \sigma(a_i^\top \theta_j + b_i), \quad Y_{ij}\mid \theta_j \;\indep\;\ \forall j,
\]
\end{definition}

Crucially, this is a nested family: any distribution representable under a $K$-dimensional IRT model is also representable under a $(K+1)$-dimensional model (see Lemma~\ref{lem:nested_mirt}). In these experiments, we will set $\sigma$ to be a probit function. 

Throughout, we use IRT as our null model because it provides a simple, intuitive way to incorporate item difficulty while remaining largely dataset-agnostic. Although many alternative null models are possible---e.g., models that account for topic specialization---these typically require additional domain knowledge or assumptions. We therefore use IRT to minimally demonstrate and illustrate our framework.



\paragraph{Datasets.} Both experiments will use two datasets following
\citet{kimcorrelated2025}. The first is \textbf{HELM}~\cite{liang2022holistic} (on MMLU data) and the second is the Open LLM Leaderboard on Huggingface\footnote{\url{https://huggingface.co/open-llm-leaderboard}} (from hereon, \textbf{HF}). \textbf{HELM} has $n=14\text{,}042$ questions over $m=72$ models, while \textbf{HF} has $n=11\text{,}994$ questions over $m=451$ models. For each, we convert the multiple choice datasets to one of binary
correctness responses $Y\in\{0,1\}^{n\times m}$.

\begin{figure*}
    \centering
    \includegraphics[width=0.85\linewidth]{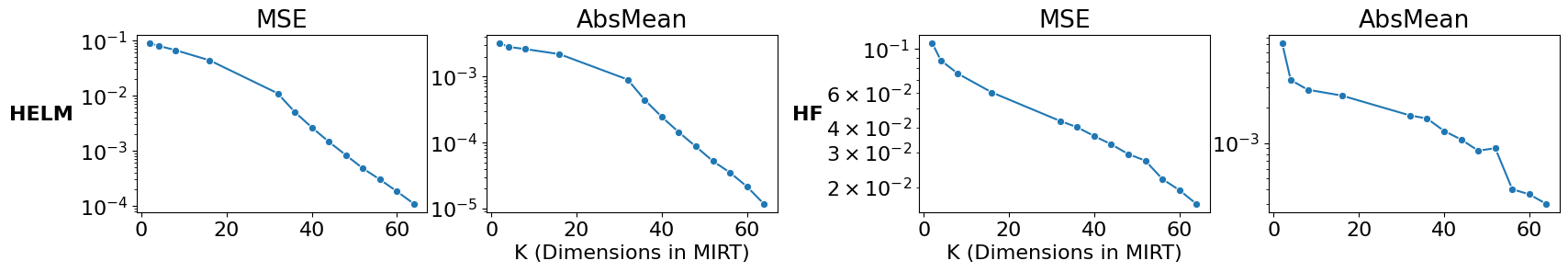}
    \caption{Mean square error (MSE) and absolute mean of the pairwise residual correlations, as a function of $K$: dimensions in the multidimensional IRT model. Left (right) shows results on the \textbf{HELM} (\textbf{HF}) dataset. As $K$ increases, residual correlations that are unexplained by the null model tend toward zero, meaning that increasingly expressive null models can arbitrarily absorb model correlations.}
    \vspace{-0.5em}
\label{fig:off_diag_main}
\end{figure*}


\subsubsection{Experiment 1: Increasing dimensions reduces excess correlation}\label{sec:setup_exp1}

Consider a sequence of IRT models with increasing dimensions $K$. These dimensions can be interpreted as capturing different ``types'' of questions -- for example, mathematical reasoning versus reading comprehension -- though we do not impose semantic labels or priors on these dimensions. 


\paragraph{Setup.} For each $K\in\{1, 2, 4, 8, 16, 32, 36, \dots, 60, 64\}$, we fit a $K$-dimensional ability model. We estimate the IRT parameters $\{a_i, b_i\}_{i=1}^n$ and $\{\theta_j\}_{j=1}^m$ via gradient-ascent on the joint log-likelihood
\[
\ell(a,b,\theta)
\;=\;
\sum_{i=1}^n\sum_{j=1}^m 
\Bigl[ 
Y_{ij}\log p_{ij} + (1-Y_{ij})\log(1-p_{ij})
\Bigr],
\]
with $\ell_2$ penalties on $a_i$, $b_i$, and~$\theta_j$ to ensure stability and identifiability. We also apply periodic whitening transformations to $\theta_j$ so that the covariance of $\theta_j$ remains close to the identity; this removes rotational and scale indeterminacies inherent to IRT models, see \cite{reckase2006multi} for details. We then recover the estimated probability of correctness (where $\Phi$ is the cdf of a standard normal): 
\[
\hat p_{ij}^{(K)} = \Phi\!\left( \hat a_i^{(K)\,\top}\hat \theta_j^{(K)} + \hat b_i^{(K)}\right).
\]

Note that our inference procedure is not intended for out-of-sample prediction, but to describe ability and difficulty within a fixed population. This is standard in IRT models.

\vspace{-0.5em}

\paragraph{Quantities of interest.}

For each $K$, we report fit via mean squared error:
$\mathrm{MSE}^{(K)}\ :=\ \frac{1}{nm}\sum_{i=1}^n\sum_{j=1}^m\big(Y_{ij}-\hat p_{ij}^{(K)}\big)^2.$ We also define the following quantities:
\vspace{-1.5em}

\[
R_{ij}^{(K)}\ :=\ Y_{ij}-\hat p_{ij}^{(K)},\qquad
\tilde R_{ij}^{(K)}\ :=\ R_{ij}^{(K)}-\frac{1}{n}\sum_{i=1}^n R_{ij}^{(K)},
\]
\vspace{-1.5em}

where $R$ represents the (itemwise) residuals, $\tilde{R}$ is the debiased $R$ to account for calibration error in $\hat{p}$, see Prop.~\ref{prop:residual-debiasing}. We then compute the residual covariance matrix $\widehat{C}$:

\vspace{-0.5em}
\[\widehat C^{(K)}\ :=\ \frac{1}{n}\big(\tilde R^{(K)}\big)^\top \tilde R^{(K)},\]
and obtain the corresponding residual correlation matrix:
\vspace{-0.2em}
\begin{equation}
\label{eq:residual_correlation}
    \widehat{\Sigma}^{(K)} := E^{-\frac{1}{2}}\widehat C^{(K)}E^{-\frac{1}{2}}, \quad E:=\text{diag}(\widehat C^{(K)})
\end{equation}
so that $\widehat{\Sigma}^{(K)}$ has unit diagonal and is bounded in $[-1, 1]$. \textbf{$\widehat{\Sigma}^{(K)}$ is our measure of excess correlation or monoculture throughout the paper}. We will later drop the superscript $(K)$ when the dimensionality $K$ is not relevant. The matrix $\widehat{C}$ (and $\widehat{\Sigma}$) is a reconstruction of the quantity in Theorem~\ref{thm:residual-dependence-vanishes}.
We report the absolute mean in order to summarize the pairwise covariance distribution for each $K$: $\text{ AbsMean}^{(K)}:=\frac{1}{m(m-1)}\sum_{j\neq \ell}\big|\widehat \Sigma^{(K)}_{j\ell}\big|$. In Appendix Figure~\ref{fig:off_diag}, we report additional summary statistics as robustness checks; the general trends still remain.

\vspace{-1em}

\paragraph{Results.}
Figure~\ref{fig:off_diag_main} shows the two quantities of interest ($y$-axis) across $K$ dimensions of the IRT model ($x$-axis) on both \textbf{HELM} and \textbf{HF}. By Proposition~\ref{prop:monotone_absorption}, enlarging the feasible class from $\mathcal N_K$ to $\mathcal N_{K+1}$ cannot increase the optimal discrepancy, and we indeed observe a monotone decrease of $\mathrm{MSE}^{(K)}$ as $K$ increases. Similarly, by Theorem~\ref{thm:residual-dependence-vanishes}, we see that the pairwise residual covariances goes to 0 as $K$ increases. This result suggests that unexplained correlation vanishes for sufficiently complex null models. Note that the figures are on a log scale, meaning that increasing dimensions exponentially decreases excess correlation.


\begin{figure*}
    \centering
    \includegraphics[width=\linewidth]{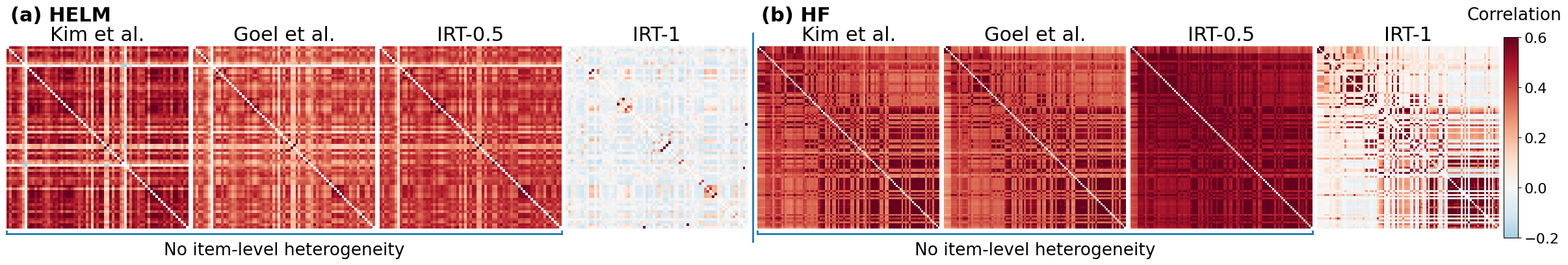}
    \caption{Residual correlation matrices for models in \textbf{HELM} (a) and \textbf{HF} (b), using different nulls: from left to right, a baseline from \citet{kimcorrelated2025}, from \citet{goelgreat2025}, from a 1D IRT with no item difficulty, and from a 1D IRT with item difficulties. Each item is a question-answer choice pair. The first three null models \textbf{do not} include item heterogeneity. As such, the corresponding excess correlation for the full IRT model is attenuated compared to the others because item difficulties absorb much of the apparent positive correlation.}
    \vspace{-0.5em}
    \label{fig:exp2_heatmaps}
\end{figure*}

\subsubsection{Experiment 2: Incorporating item difficulty reduces excess correlation}
\label{sec:exp2_expressivity}

\textbf{Setup.} We now compare our framework to prior empirical studies of model monoculture that also analyze multiple-choice benchmarks: \citet{goelgreat2025} and \citet{kimcorrelated2025}. These works study monoculture relative to baselines that adjust for model-level capability, but their implicit null models \textbf{do not incorporate item-level heterogeneity}. \citet{kimcorrelated2025} measure agreement rates conditional on both models being incorrect, compared to a null model where all the incorrect choices have an equal probability of being chosen.
Meanwhile, \citet{goelgreat2025} measure how much two models agree on their predicted probabilities beyond what would be expected from their marginal accuracies alone. In our experiments, we do not have access to LLM logits, and so we use the correctness data instead. See Appendix~\ref{appsec:details_exp2} for details. Both baselines adjust for differences in model capability, but treat all items as exchangeable. 


In contrast, we fit two 1-dimensional IRT models, similar to Section~\ref{sec:setup_exp1}.\footnote{Both \citet{kimcorrelated2025} and \citet{goelgreat2025} use information about multiple choice predictions. So far, we have only considered binary correctness at the question level without the specific answer choices, which indeed provides additional information about cross-model agreement. To provide a similar comparison, we naively fit a 1D IRT model where each observation is a question-answer choice pair. See Appendix~\ref{appsec:details_exp2} for details. We find similar results compared to fitting on binary correctness data, see Figure~\ref{fig:exp2_heatmaps_addl}.} First, we consider a restricted model in which item difficulty is fixed across all items ($d_i \equiv 0$), so that variation in performance is explained solely by model ability (call this IRT-0.5). 
This regime serves as our closest analogue to the baselines used by \citet{kimcorrelated2025} and \citet{goelgreat2025}, which similarly do not adjust for item-level variation. Second, we fit a full 1-dimensional IRT model with item difficulty parameters, thereby introducing item heterogeneity (IRT-1).
The second model is a strictly more expressive null model along the null ladder. By Proposition~\ref{prop:monotone_absorption} and Theorem~\ref{thm:residual-dependence-vanishes}, any residual correlations attributable to unmodeled item difficulty should be attenuated when moving from IRT-0.5 to IRT-1.

\vspace{-0.5em}

\paragraph{Results.}
Figure~\ref{fig:exp2_heatmaps} reports the empirical residual correlation matrices $\widehat\Sigma$ under constant item difficulty (IRT-0.5) and item heterogeneity (IRT-1), for both \textbf{HELM} and \textbf{HF} (restricted to the first 100 models for visualization).
Relative to \citet{goelgreat2025} and \citet{kimcorrelated2025}, the correlation inferences for IRT-0.5 are slightly higher because it is relatively less expressive: IRT-0.5 only accounts for raw accuracies as a baseline, while both previous works incorporate more specific information about disagreement rates. However, crucially, the correlation inferences under IRT-1 are substantially attenuated compared to the other three, which do not model item heterogeneity -- some even flipping from strongly positive to slightly negative. This result highlights the role of increasing expressivity: models may appear correlated simply because they succeed or fail on the same easy or hard questions.
When difficulty is explicitly modeled, this apparent agreement is absorbed by the null, revealing a markedly different dependence structure, though the pairwise ordering remains relatively stable. We note that the inferred correlation under IRT-1 for \textbf{HELM} and \textbf{HF} differ significantly; we explain this difference next. 

\begin{figure*}
    \centering
    \includegraphics[width=0.8\linewidth]{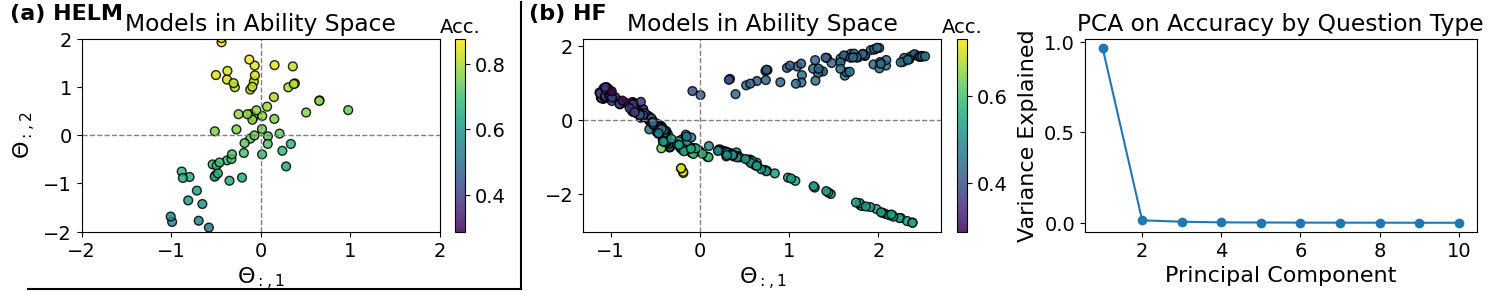}
    \includegraphics[width=0.8\linewidth]{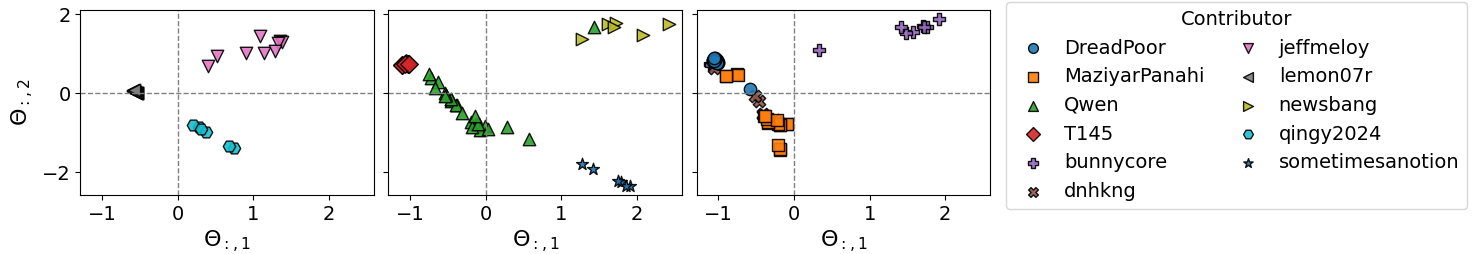}
    \caption{(a) Scatter plot of inferred model ability $\Theta$ in a two-dimensional IRT model, colored by accuracy over entire benchmark dataset (\textbf{HELM}). (b; top middle) Same as (a), but for \textbf{HF} data. (b; top right) Percentage of variance explained by principle component, when applying PCA on model accuracy stratified by question type. (b; bottom) Scatter plot of inferred ability $\Theta$, disaggregated by contributor.}
    \vspace{-0.5em}
    \label{fig:irt_k2}
\end{figure*}


\subsection{Interpreting Monoculture}\label{sec:interpreting_monoculture}

Thus far, we have established that claims of monoculture are only relative to a null model. Here, we empirically demonstrate that one can interpret the structure of monoculture by fitting a particular null model. We illustrate this idea by fitting a $K=2$ IRT model to both \textbf{HELM} and \textbf{HF}, and inspecting the geometry of the resulting latent spaces. This approach allows us to ask what kinds of variation the null absorbs, and what kinds remain unexplained.

Our results are shown in Figure~\ref{fig:irt_k2}. For \textbf{HELM}, we find that model abilities largely collapse onto a single dimension that is highly correlated with overall accuracy (subfig. (a)). Once this dimension is accounted for, there is little remaining structured variation across models. 

\textbf{HF} (subfig. (b)) presents a more complex picture. Model abilities are more dispersed in the two-dimensional latent space and form distinct clusters that are not strongly correlated with overall accuracy (top middle plot). A natural hypothesis is that these clusters correspond to specialization (e.g., mathematical ability vs reading comprehension). However, a PCA analysis of model accuracies stratified by question type does not support this explanation (top right plot). Instead, the observed clustering appears to be driven by shared provenance among models, such as common contributors or development pipelines (bottom plots). We emphasize, however, that there may be additional factors that explains the heterogeneity beyond what we have explored.

Importantly, these differences should not be read as evidence that one null
model is more ``appropriate'' than another. Rather, they demonstrate how fitting
a null model can serve as a diagnostic tool: it makes explicit which priors are
consistent with the data and clarifies what residual correlations should be
interpreted as. In particular, the differences in correlation structure may reflect differences in model diversity. \textbf{HELM} consists primarily of closed-source models, which are likely to share training pipelines, limiting variation in behavior. In contrast, \textbf{HF} includes open-source models with more heterogeneous design choices. We build on this observation in the next section.

\vspace{-0.1em}

\section{Relativity of Population}\label{sec:population}

In the previous section, we showed that while there is inherent subjectivity in the null model, one can nevertheless fit and defend a reasonable null. In this section, we take that null model as fixed—for example, a one-dimensional IRT model—and interpret \emph{excess correlation} as whatever dependence remains after fitting it.

Our first result (\Cref{prop:relativity}) is that both the fitted null parameters and the resulting excess correlation are \textbf{relative to the population of questions and models}. Intuitively, the set of models included in the analysis determines what can be learned about item parameters (e.g., difficulty), while the set of items determines what can be learned about model parameters (e.g., performance). For example, if all questions are trivially easy, models will often correctly agree with each other, and thus it becomes difficult to distinguish between model capability and genuine monoculture.


Our second result highlights the value of evaluating on diverse sets of models and items. Intuitively, greater diversity results in more informative disagreements between models that better reveal the underlying correlation structure. We provide a measure of \textit{stability} and show that when our evaluation context is unstable, we cannot reliably distinguish expected from unexpected agreement.

\vspace{-0.1em}
\subsection{Preliminaries}

Consider the same setup as described in Section~\ref{sec:setup_exp1}. For notational clarity, we will introduce a more specific parametrization of the null model, $\Theta = (U,V)$, which has the following form:
\begin{equation}\label{eq:generic-null}
Y_{ij}\mid (U_i,V_j)\ \indep\ \forall j,
\quad 
\Pr(Y_{ij}=1\mid U_i,V_j)=f(U_i,V_j),
\end{equation}
where $U_i\in\mathbb R^r$ are item factors, $V_j\in\mathbb R^r$ are model factors,
$f:\mathbb R^r\times\mathbb R^r\to(0,1)$ is a fixed link satisfying mild
regularity, and $r\in\mathbb N$ is fixed. Note that this includes the IRT models as special cases.

For any choice of item set $I$ and model set $J$, let $P_{I,J}$ denote the joint distribution of model responses on items $I$ and models $J$. Let $\mathcal N_r(I,J)$ denote the set of distributions that can be explained by an $r$-dimensional null model \eqref{eq:generic-null} using only items $I$ and models $J$.

\begin{definition}[Population-specific target]
    Given any discrepancy measure $D(P,Q)$ (e.g.\ TV or an $f$-divergence), define the \emph{best-fitting null} as \begin{equation}\label{eq:proj-null}
    Q^\star_{I,J}\ \in\ \arg\min_{Q\in\mathcal N_r(I,J)} D(P_{I,J},Q).
    \end{equation}
    \vspace{-1.5em}
    \label{def:population_target}
\end{definition}

Under Definition~\ref{def:population_target}, any notion of ``excess correlation'' built from residuals under $Q^\star_{I,J}$ is a function of $(I,J)$. As before, we use Eq.~\eqref{eq:residual_correlation} as our notion of excess correlation.

\subsection{Inferences depend on the population}

It is almost immediate that changing the population changes the null fit, which also changes the residual covariances, as formalized in the following proposition.

\begin{proposition}[Population relativity of the null fit]\label{prop:relativity}
Fix $r$ and $D$. The optimizer $Q^\star_{I,J}$ in \eqref{eq:proj-null} varies as $I, J$ change: there exist item/model sets $(I,J)$ and $(I',J')$ with $I'\subseteq I$ and $J'\subseteq J$ such that the induced marginals do not match,
\[
\bigl(Q^\star_{I,J}\bigr)\big|_{(I',J')}\ \neq\ Q^\star_{I',J'}.
\]
Here, $\bigl(Q^\star_{I,J}\bigr)\big|_{(I',J')}$ denotes the marginal predictions for items $I'$ and models $J'$ induced by a null model fit on the larger population $(I, J)$.
\end{proposition}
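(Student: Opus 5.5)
This is an existence statement, so I will prove it by exhibiting one explicit counterexample. I will take the simplest instance of the null \eqref{eq:generic-null}: the one-dimensional IRT null without discrimination, $f(U_i,V_j)=\sigma(\theta_j+b_i)$ with a strictly increasing link $\sigma$, and $D=\mathrm{TV}$ (or KL; the argument only needs the minimizer structure). The idea is that on a tiny population the null is rich enough to fit the data exactly. On a larger population it is not, and the fit compromises in a way that moves the predictions on the sub-population.

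\textbf{Step 1 (small population fits exactly).} Let $J'=\{1,2\}$ and $I'=\{1\}$. On a single item, the null $\mathcal N_r(I',J')$ contains every product law on $\{0,1\}^2$ with marginals $(p_1,p_2)$, since $\theta_1,\theta_2$ are free. So if the true $P_{I',J'}$ is itself a product law, for instance $P(Y_{11}=1)=P(Y_{12}=1)=1/2$ independently, then $Q^\star_{I',J'}=P_{I',J'}$ and the minimum discrepancy is $0$. In particular the fitted marginal success probabilities are $(1/2,1/2)$.

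\textbf{Step 2 (large population forces a compromise).} Add a second item and a third model, giving $I=\{1,2\}$ and $J=\{1,2,3\}$. Choose the true per-item success probabilities so that they are \emph{not} of the additive form $\sigma(\theta_j+b_i)$. A crossing pattern works: model $1$ is better than model $2$ on item $2$, while model $3$ reverses the ordering relative to item $1$. The additive structure forces the rank order of models to be the same on every item, which rules out exact representation. Take the items independent and the models independent under $P$, so $P_{I,J}$ is a product over cells with the chosen marginals. Then $P_{I,J}\notin\mathcal N_r(I,J)$, so every $Q\in\mathcal N_r(I,J)$ has $D(P_{I,J},Q)>0$. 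To argue the optimizer exists, I will use continuity and the compactness obtained by allowing $\pm\infty$ limits of parameters.

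\textbf{Step 3 (the restriction differs).} I need to show that the optimizer on $(I,J)$ does not assign $(1/2,1/2)$ to cells $(1,1),(1,2)$. Suppose it did. Then $\theta_1+b_1=\theta_2+b_1$, so $\theta_1=\theta_2$, and consequently cells $(2,1)$ and $(2,2)$ would also receive equal probabilities. But I will choose $P$ so that those cells have sharply different true rates, say $0.9$ versus $0.1$. Fixing all other coordinates, a small perturbation that separates $\theta_1$ from $\theta_2$ then strictly lowers the discrepancy. This contradicts optimality.

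\textbf{Main obstacle.} The delicate step is Step 3: showing rigorously that a first-order perturbation decreases $D$, especially for TV, which is nonsmooth. If that is troublesome I will switch to KL, where the best-fitting null is the maximum-likelihood projection and the gradient with respect to $\theta_1-\theta_2$ is explicitly nonzero at $\theta_1=\theta_2$. Since the proposition only requires the existence of some $D$, or holds for the fixed $D$ provided the same example works, I will note that the construction adapts to TV by a direct computation on this small $2\times3$ table.
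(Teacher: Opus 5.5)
Your proposal is correct, and it takes a genuinely different route from the paper. The paper's proof is non-constructive. It observes that the fit on $(I,J)$ must accommodate the extra items and models through the shared $r$-dimensional representation, and asserts that restriction and refitting agree only under a ``special compatibility condition.'' It never exhibits a population where that condition fails.

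You supply exactly that witness: a sub-population on which the null is exact, embedded in a larger one it cannot represent. You add a local argument that the large fit must break $\theta_1=\theta_2$. For KL this is airtight, because KL between product laws is a sum of cellwise binary KLs. Along $(\theta_1,\theta_2)\mapsto(\theta_1+t,\theta_2-t)$ at any point with $\theta_1=\theta_2$:
\begin{itemize}
\item item 1 contributes zero derivative, since its two true rates coincide;
\item item 2 contributes $-0.8\,\sigma'(\theta_1+b_2)/\bigl(p(1-p)\bigr)<0$.
\end{itemize}
Coercivity modulo the shift $(\theta_j+c,\,b_i-c)$ then gives a finite minimizer.

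The price is generality. You prove the claim for $r=1$, a Rasch link and KL, not for arbitrary fixed $(r,D)$. But the statement cannot hold for every link and discrepancy anyway: a constant link makes the null a single law, so restriction and refitting trivially agree.

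A few clean-ups:
\begin{itemize}
\item Model 3 is superfluous. The $2\times 2$ table with $q_{11}=q_{12}=\tfrac12$, $q_{21}=0.9$, $q_{22}=0.1$ is already non-representable, and your ``crossing'' description does not match these numbers.
\item For TV you can avoid perturbations altogether. Let $p$ be the common item-2 rate. Marginalizing to the item-2 cells gives $\mathrm{TV}\ge 0.81-p(1-p)\ge 0.56$ for every null law with $\theta_1=\theta_2$. Now take $b_1=b_2$, $\sigma(\theta+b)=\tfrac12$ and $\theta_{1,2}=\theta\pm t$, and write $\tfrac12+\delta_1$ and $\tfrac12-\delta_2$ for the two shifted rates. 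This gives $\mathrm{TV}=0.81-(\tfrac12+\delta_1)(\tfrac12+\delta_2)<0.56$ for small $t>0$.
\item The KL gradient step needs $\sigma$ differentiable with $\sigma'>0$; probit and logit satisfy this.
\end{itemize}
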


Intuitively, Proposition \ref{prop:relativity} says that the null model’s inferences depend on the population used to fit it. 
When the set of items or models changes, the model redistributes explanatory power between these components. In this sense, notions such as difficulty, ability, and excess correlation are not intrinsic properties of isolated models or items, but are defined relative to the population under consideration.



What is the ``right'' population to choose? For example, if all questions are very similar in difficulty, then we cannot expect to learn much about model behavior. Similarly, if all models behave nearly identically, we cannot distinguish whether they are truly homogeneous or simply being evaluated on trivially easy or hard questions. This intuition suggests that diversity is useful, as we formalize next.




\paragraph{Heterogeneity measure.}
Let $(U^\star,V^\star)$ be any fitted null representation achieving $Q^\star_{I,J}$ in Defn.~\ref{def:population_target}.
Define summary covariance matrices $G$ of the fitted item and model factors,
\[
G_U^\star := \sum_{i\in I} U_i^\star (U_i^\star)^\top,
\qquad
G_V^\star := \sum_{j\in J} V_j^\star (V_j^\star)^\top,
\]
\vspace{-1em}
and use them to measure heterogeneity $h$:

\[
h(I,J)\ :=\ \min\bigl\{\lambda_{\min}(G_U^\star),\ \lambda_{\min}(G_V^\star)\bigr\}.
\vspace{-0.2em}
\]

Here, $\lambda_{\min}(\cdot)$ denotes the smallest eigenvalue, which measures the least amount of variation along any direction in the latent space. Thus, $h(I, J)$ captures the weakest direction of variation across items or models under the fitted null. Larger $h(I,J)$ means that items and models explore more independent directions in the latent space (i.e., more heterogeneous). As we show in the next result, heterogeneity monotonically improves the conditioning of the null fit.

\begin{theorem}
\label{thm:heterogeneity}
Let $p^\star_{ij}=f(U_i^\star,V_j^\star)$ be the fitted probabilities, where $f$ satisfies standard regularity conditions. For $\varepsilon>0$, define the \emph{$\varepsilon$-equivalence set} of latent representations by
\[
\mathcal E_\varepsilon
:=
\Bigl\{(U,V): \sum_{i\in I}\sum_{j\in J}\bigl(f(U_i,V_j)-p^\star_{ij}\bigr)^2 \le \varepsilon^2\Bigr\},
\vspace{-0.5em}
\]

modulo invariances of the null model. Then there exists a constant $C>0$ such that, for all sufficiently small $\varepsilon$,
\[
\sup_{(U,V)\in\mathcal E_\varepsilon}
\Bigl(\sum_{i\in I}\|U_i-U_i^\star\|^2+\sum_{j\in J}\|V_j-V_j^\star\|^2\Bigr)^{1/2}
\ \le\
\frac{C}{\sqrt{h}}\,\varepsilon.
\vspace{-0.5em}
\]

\end{theorem}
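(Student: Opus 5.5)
We argue by a local linearization of the map $\Phi:(U,V)\mapsto \bigl(f(U_i,V_j)\bigr)_{i\in I,j\in J}$ around the fitted representation $(U^\star,V^\star)$, followed by a lower bound on the Jacobian in terms of $h$. We take the standard regularity conditions to mean three things. First, $f$ is $C^2$ in a neighborhood of the fitted parameters. Second, $f$ depends on $(U_i,V_j)$ through a score $s_{ij}=g(U_i,V_j)$ whose derivative satisfies $|\sigma'(s)|\ge c_0>0$ on the relevant compact set; the prototype is $f=\sigma(U_i^\top V_j)$, with IRT obtained by absorbing $b_i$ into a constant coordinate. Third, the invariances of the null, such as the $GL_r$ action $(U,V)\mapsto(UA,VA^{-\top})$ for bilinear scores, act smoothly, so that we can work on a slice $\mathcal S$ transverse to the orbit through $(U^\star,V^\star)$.

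Under these assumptions, for perturbations $(\Delta U,\Delta V)$ the first-order change in $p_{ij}$ is $\sigma'(s^\star_{ij})\bigl(\Delta U_i^\top V_j^\star+U_i^{\star\top}\Delta V_j\bigr)$. We then bound $\|D\Phi(\Delta U,\Delta V)\|^2$ from below by $c_0^2\sum_{i,j}(\Delta U_i^\top V^\star_j+U^{\star\top}_i\Delta V_j)^2$. Expanding the square gives the two diagonal terms
\[
\sum_i \Delta U_i^\top G_V^\star\Delta U_i\ \ge\ \lambda_{\min}(G_V^\star)\sum_i\|\Delta U_i\|^2,\qquad \sum_j \Delta V_j^\top G_U^\star \Delta V_j\ \ge\ \lambda_{\min}(G_U^\star)\sum_j\|\Delta V_j\|^2,
\]
together with a cross term $2\sum_{i,j}(\Delta U_i^\top V_j^\star)(U_i^{\star\top}\Delta V_j)$. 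Summed, the diagonal terms are at least $h\,\|(\Delta U,\Delta V)\|^2$.

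The main obstacle is the cross term, which can be negative. It is exactly what generates the tangent directions of the invariance orbit: for instance, $\Delta U=U^\star B$ and $\Delta V=-V^\star B^\top$ produce zero change in $p$. The plan is to restrict to the slice $\mathcal S$, where the tangent directions are orthogonal to $\{(U^\star B,-V^\star B^\top)\}$, for example by imposing the balancing gauge $U^{\star\top}\Delta U=V^{\star\top}\Delta V$-type constraints. One then shows that on this complement the cross term is controlled. We would first try writing the quadratic form as $\|\Delta U\,V^{\star\top}+U^\star\Delta V^\top\|_F^2$ (the unweighted version) and using the identity
\[
\|\Delta U V^{\star\top}+U^\star\Delta V^\top\|_F^2\ \ge\ \tfrac12\bigl(\sigma_r(V^\star)^2\|\Delta U\|_F^2+\sigma_r(U^\star)^2\|\Delta V\|_F^2\bigr)
\]
for perturbations orthogonal to the orbit, as in the perturbation theory of low-rank factorizations. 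This yields $\|D\Phi\,\xi\|\ge c_1\sqrt h\,\|\xi\|$ for $\xi\in T\mathcal S$. If this identity resists, we fall back on a weaker but sufficient statement: positive-definiteness on $T\mathcal S$ with a constant proportional to $h$, obtained via a compactness argument on the unit sphere of $T\mathcal S$.

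Finally, we pass from the linear bound to the nonlinear one. By Taylor's theorem, $\|\Phi(\xi^\star+\xi)-\Phi(\xi^\star)\|\ge c_1\sqrt h\|\xi\|-L\|\xi\|^2$, where $L$ bounds the second derivative of $f$ on a neighborhood. For $\|\xi\|\le c_1\sqrt h/(2L)$ this gives $\|\xi\|\le \frac{2}{c_1\sqrt h}\,\varepsilon$ for every element of $\mathcal E_\varepsilon\cap\mathcal S$ in that neighborhood. To handle points of $\mathcal E_\varepsilon$ far from $(U^\star,V^\star)$ modulo invariance, we need local identifiability: $\Phi$ is injective modulo invariances on the relevant compact parameter set, so $\inf\{\|\Phi(\xi)-\Phi(\xi^\star)\|:\operatorname{dist}(\xi,\text{orbit})\ge\delta\}>0$. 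Then for $\varepsilon$ sufficiently small, $\mathcal E_\varepsilon$ lies inside the $\delta$-neighborhood of the orbit. Taking $C=2/c_1$ completes the argument, with the dependence on $h$ entering only through the Jacobian lower bound.
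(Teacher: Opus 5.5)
Your skeleton matches the paper's: linearize at $(U^\star,V^\star)$, lower-bound the linear term through $G_U^\star$ and $G_V^\star$, then absorb the quadratic Taylor remainder. The difference is at the crux, and your handling is the one that works. The paper removes the cross term in Step 2 via $(a+b)^2\ge\tfrac14a^2+\tfrac14b^2$, which is false for $a=-b\neq0$. For its own MIRT example $f(u,v)=g(u^\top v)$, the directions $(\Delta U,\Delta V)=(U^\star B,-V^\star B^\top)$ give $\Delta p^{\mathrm{lin}}\equiv0$, while its displayed bound (3) is strictly positive there. The ``modulo invariances'' clause is never used in that argument.

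Your restriction to the slice orthogonal to the orbit is exactly the missing ingredient, and your primary inequality holds, so the compactness fallback is unnecessary. (That fallback would only give \emph{some} positive constant, not one proportional to $h$.) Orthogonality to $\{(U^\star B,-V^\star B^\top)\}$ is the condition $N:=U^{\star\top}\Delta U=\Delta V^\top V^\star$. The cross term is then $2\,\mathrm{tr}(\Delta U^\top U^\star\,\Delta V^\top V^\star)=2\|N\|_F^2\ge0$, which gives $\|\Delta U V^{\star\top}+U^\star\Delta V^\top\|_F^2\ge h\,\|(\Delta U,\Delta V)\|^2$, even without your factor $\tfrac12$.

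Do fix the explicit gauge you wrote, which is missing a transpose. With $U^{\star\top}\Delta U=V^{\star\top}\Delta V=:M$, the cross term becomes $2\,\mathrm{tr}(M^2)$, which is negative for nonzero antisymmetric $M$. Moreover, when $G_U^\star=G_V^\star$ and $r\ge2$, that subspace contains the orbit directions with antisymmetric $B$, where the linearization vanishes.

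Your scalar-weight structure $\sigma'(s^\star_{ij})\ge c_0$ is what lets you pull out $c_0^2$ and reduce to the unweighted form, whose kernel is exactly the orbit tangent. The paper's (R3), with general matrices $A_{ij},B_{ij}$, supports no such reduction. Its step from $\langle A_{ij}V_j^\star,\Delta U_i\rangle^2$ to $m^2\langle V_j^\star,\Delta U_i\rangle^2$ is only valid when $A_{ij}$ is a scalar multiple of the identity.

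Your far-point step also covers something the paper's proof omits. The paper only bounds points of $\mathcal E_\varepsilon\cap\mathcal N$, whereas the theorem takes a supremum over all of $\mathcal E_\varepsilon$. This costs you extra hypotheses: a compact parameter set, and injectivity of $(U,V)\mapsto(f(U_i,V_j))_{i,j}$ modulo invariances. These are unavoidable for the global statement. If a second, inequivalent representation reproduced $p^\star$, the supremum would stay bounded away from $0$ for every $\varepsilon$. Making them explicit is therefore a strength of your proposal, not a gap.
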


Intuitively, if many items (or models) are similar, the null model has substantial freedom: it can shift latent factors across similar directions without meaningfully changing its predicted probabilities. 
However, when items and models are diverse, this flexibility disappears: the null parameters become locally well-identified, and the fitted representation is more stable. Thus, having datasets and models with heterogeneity is itself helpful in measuring monoculture.

\begin{figure*}
    \centering
    \includegraphics[width=\linewidth]{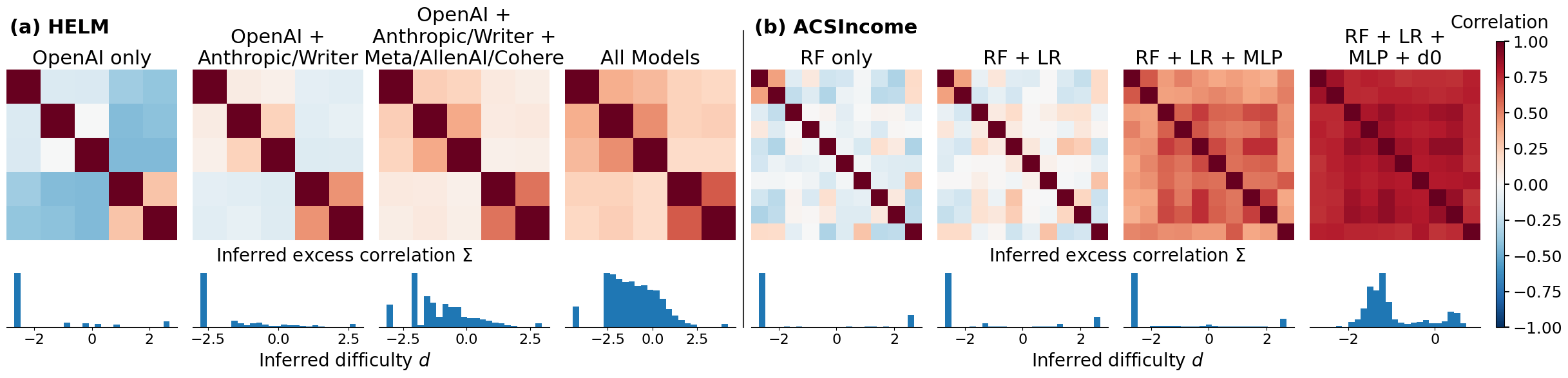}
    \caption{[Top] Inferred excess correlation matrix $\hat{\Sigma}$ from the two-stage procedure, on \textbf{HELM} (a) and \textbf{ACSIncome} (b) data. For \textbf{HELM}, the correlation matrix is shown for OpenAI models only. From left to right, more models are injected into the population for inference, starting from only OpenAI models to all models. For \textbf{ACSIncome}, the correlation matrix is shown for random forest models (RF) only, following a similar pattern as \textbf{HELM}. [Bottom] Histogram of inferred question difficulties $d$ for all questions in the dataset.}
    \vspace{-0.5em}
    \label{fig:heterogeneity}
\end{figure*}

\subsection{Experiments}
We now demonstrate Proposition~\ref{prop:relativity} and Theorem~\ref{thm:heterogeneity}: that once a null model is fixed, \textit{(i)} inferred parameters vary depending on the population of interest; and \textit{(ii)} these inferences are less reliable when the population is homogeneous.

\textbf{Datasets.} We use two datasets: \textbf{HELM} as described in Section~\ref{sec:empirical-mirt-absorption}, and \textbf{ACSIncome} \cite{ding2021retiring}, which is a prediction task to determine whether an individual earns over \$50{,}000 based on their demographic attributes. We use this dataset because it allows finer-grained control over model classes and their inductive biases: we train multiple random forests (RF), logistic regression models (LR), and multilayer perceptrons (MLP) on an 80\% training split and evaluate them on the held-out 20\%. For each model, we convert held-out predictions into a binary correctness variable, which forms the dataset used in our experiment.

\paragraph{Setup.}
We begin by restricting attention to a subset of models that are expected to be highly correlated due to shared inductive biases. For \textbf{ACSIncome}, we first infer $\widehat{\Sigma}$ using only the random forest models; for \textbf{HELM}, we start with only OpenAI models, which likely share underlying architectures and training.
We then progressively enlarge the population by adding other models and
re-estimating $\widehat{\Sigma}$ at each step. In \textbf{ACSIncome}, for
example, the experiment involves adding logistic regression (LR), then
multilayer perceptrons (MLP), then an artificial set of 25 models where each
model has an independent probability of success $p \sim \mathcal N(0.7, 0.2)$
(d0). We include this final step to illustrate the case where ``ground truth''
item difficulties are all identical, which further increases inferences of excess
correlation. 

\vspace{-0.5em}
\paragraph{Results.}
Figure~\ref{fig:heterogeneity} shows that when we restrict the population to a
highly correlated subset of models (left subfigures), inferences about
correlation reduce to noise. In \textbf{ACSIncome} (b), random forests (RF) do
not seem correlated despite the inductive biases shared within the model class,
since there are no additional disagreements in the dataset to contextualize the
agreements within the RF class. Intuitively, the item difficulties we learn
correspond to difficulty \textit{for RF models specifically}.
Similarly in \textbf{HELM} (a), OpenAI models do
not seem correlated despite sharing similar base architectures and training
data. In these two regimes where models are homogeneous,
Thm~\ref{thm:heterogeneity} suggests that many possible $(U, V)$ values can
plausibly explain the data, yielding dramatically different inferences about
excess agreement.
These inferences change as we progressively add more diverse
models (as per Prop~\ref{prop:relativity}). In particular, we learn a more
general notion of difficulty that holds across a broader class of models.
This is because more diversity improves
the conditioning of our inferences as per Thm~\ref{thm:heterogeneity}.

The bottom row of Figure~\ref{fig:heterogeneity}, which plots the distribution of inferred item difficulties $\hat{d}$, further explains this behavior. When the model population is homogeneous, many items are inferred to be very easy or very hard. Without enough disagreement across models, the null model explains agreement by reassigning it to item difficulty rather than to shared model behavior. 
As more diverse models are introduced, disagreements across models constrain the null fit, spreading $\hat{d}$ distribution more evenly and pushing $\widehat \Sigma$ up.


\section{Discussion}\label{sec:discussion}

\paragraph{Relative nature of correlation.} Claims of model correlation are only valid relative to a well-specified null model. We show that inferences of monoculture are sensitive to two key choices: \textit{(1)} the baseline null model; and \textit{(2)} the population of models and questions. Simple changes in both can dramatically alter what previous works have determined as monoculture (Figs~\ref{fig:exp2_heatmaps} and~\ref{fig:heterogeneity}). Researchers should therefore carefully consider and justify these two choices in their evaluations. In particular, our exercise in Section~\ref{sec:interpreting_monoculture} can serve as a stepping stone for researchers to better interpret the underlying correlation structure behind their data.
\vspace{-0.5em}
\paragraph{Implications for model multiplicity.}
Our \textbf{ACSIncome} experiments demonstrate how the our framework interacts with the literature on model multiplicity, which studies how distinct models can yield similar predictive performance~\citep{black2022model}.
Several studies consider the implications of individual-level arbitrariness~\citep{marx2020predictive,gomez2024algorithmic,watson2024predictive,10.1145/3689904.3694706}: There are some individuals on whom models agree, and there are others on whom they disagree.



Is this
attributable to inherent ``difficulty'' in classifying those individuals, or is
it simply due to monoculture among the class of models under consideration? The
answer depends, at least in part, on the population of models chosen. As in
\Cref{fig:heterogeneity}(b), if the models under consideration are all highly
similar, it might appear that certain individuals are simply ``obviously
low-income,'' since every model predicts their income to be low. But as we add
heterogeneity to the population of models, these may no longer be obvious
classifications. Instead, we see far more heterogeneity in the inferred
difficulty of correctly classifying each individual.


\paragraph{Limitations and Future Work.}

We have shown that conclusions about monoculture are inherently tied to the choice of null model.
Our work opens up a natural set of open questions around appropriate selection of a null model, which we hope future work can address.
Ultimately, by formalizing the baseline against which agreement is measured, our work provides a path toward distinguishing between the productive consensus of capable systems and the brittle redundancy of algorithmic monoculture.

\section*{Impact Statement}
Our work contributes to more responsible and interpretable evaluation of AI systems by showing that conclusions about model correlation and monoculture depend critically on modeling assumptions and the given population. By making these dependencies explicit, our framework can help prevent overconfident claims about robustness or diversity, which has broad implications for AI governance and auditing. At the same time, these results underscore the need for careful communication: the relative nature of evaluations should not be used to dismiss genuine risks of correlated failures, but rather to motivate more transparent and large-scale evaluation practices.

\bibliography{bib}
\bibliographystyle{icml2026}

\newpage
\appendix
\onecolumn
\section{Other Related Work}\label{appsec:related_work}
\textbf{Model Evaluations using IRT.} While our results apply broadly to any choice of null models, in this work we often use IRT models as a null because they incorporate item-level heterogeneity. Many recent works have used IRT in their evaluations \cite{martinez2019item, rodriguez2021evaluation, polo2024tiny, schilling2025lifting, zhou2025lost, castleman2025rethinking}, acknowledging that questions vary in characteristics and some may be more informative than others. However, all of these works use IRT in the context of evaluating model capabilities rather than, as we do, inter-model correlations.

\textbf{Model Multiplicity.} Model multiplicity suggests that highly expressive models admit many different solutions that are consistent with the same training data, often reflecting different inductive biases or arbitrary choices among equivalent hypotheses \cite{black2022model, marx2020predictive}. There have been many works that seek to characterize or mitigate multiplicity across a variety of contexts \cite{pawelczyk2020counterfactual, rodolfa2021empirical, coston2021characterizing,d2022underspecification,semenova2022existence, black2022consistent, black2022selective}. In these works, multiplicity is typically framed as a property of the \textit{model class}: more flexible hypotheses allow for a greater number of solutions. However, our results suggest another perspective in which multiplicity is also shaped by \textit{the population of models used for evaluation}. We discuss this further in Section~\ref{sec:discussion}.

\section{Additional Theorems and Proofs}\label{appsec:thm_proofs}
\subsection{Additional Results}

\begin{lemma}[Nestedness of ability nulls]\label{lem:nested_mirt}
For all $K$, we have $\mathcal N_K \subseteq \mathcal N_{K+1}$.
\end{lemma}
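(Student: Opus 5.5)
The plan is to embed a $K$-dimensional IRT parametrization into a $(K+1)$-dimensional one by padding with zeros. Take any law $Q\in\mathcal N_K$. By Definition~\ref{def:mirt_null}, $Q$ is generated by some item parameters $a_i\in\mathbb R^K$, $b_i\in\mathbb R$ and model abilities $\theta_j\in\mathbb R^K$, with $p_{ij}=\sigma(a_i^\top\theta_j+b_i)$ and conditional independence across $j$. If $Q$ is a mixture over such parameters, the same argument applies pointwise to each mixture component.

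First, I define the lifted parameters
\[
\tilde a_i := (a_i,0)\in\mathbb R^{K+1},\qquad \tilde\theta_j := (\theta_j,0)\in\mathbb R^{K+1},\qquad \tilde b_i := b_i .
\]
Any choice of the last coordinate of $\tilde\theta_j$ also works, since it is multiplied by zero.

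Next, I check that the inner products are preserved:
\[
\tilde a_i^\top\tilde\theta_j = a_i^\top\theta_j + 0\cdot 0 = a_i^\top\theta_j,
\]
so $\sigma(\tilde a_i^\top\tilde\theta_j+\tilde b_i)=p_{ij}$ for every $i,j$. The link $\sigma$ is the same fixed function at every level, so no other term changes.

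It remains to confirm that the joint law is unchanged. The conditional independence structure $Y_{ij}\mid\tilde\theta_j\ \indep$ is imposed by the model definition and is identical at both levels. The conditional product-Bernoulli law therefore has the same marginals $p_{ij}$. If abilities are random, the pushforward of their distribution under $\theta\mapsto(\theta,0)$ yields the same mixture. Hence $Q\in\mathcal N_{K+1}$.

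There is no genuine obstacle here. The only point needing care is making precise what ``the set of laws generated by'' the model means, deterministic versus mixed latent abilities, and noting that the zero-padding map commutes with mixing. I would state that convention explicitly and handle both cases with the same embedding.
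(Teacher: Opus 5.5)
Your proposal is correct and is essentially the paper's proof. Both zero-pad $a_i$ and $\theta_j$ into $\mathbb R^{K+1}$, which preserves $a_i^\top\theta_j$ and hence every $p_{ij}$. Your extra remark that the zero-padding map commutes with mixing over random abilities is a harmless refinement the paper does not spell out.
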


\begin{proof}
Embed the $K$-dimensional latent $\theta_i$ into $\mathbb R^{K+1}$ by
$\theta_i'=(\theta_i,0)$ and extend $a_j'=(a_j,0)$.  
Then $\sigma(a_i'{}^\top\theta_j'+b_i)=\sigma(a_i^\top\theta_j+b_i)$,  
so every law representable at dimension $K$ is also representable at $K{+}1$.
\end{proof}

\begin{proposition}[Debiasing residuals removes calibration error]
\label{prop:residual-debiasing}
Let $Y_{ij}\in\{0,1\}$ denote the observed response for item $i\in\{1,\dots,n\}$ and model/respondent $j\in\{1,\dots,m\}$. 
Assume the item rows $Y_i=(Y_{i1},\dots,Y_{im})$ are i.i.d.\ from some law $P$.
For each $(i,j)$, let $p^{\star}_{ij} := \Pr(Y_{ij}=1)$ denote the \emph{true} success probability and let $\hat p_{ij}\in(0,1)$ be any (possibly misspecified) model-based estimate of $p^{\star}_{ij}$.

Define the raw residual
\[
e_{ij} \;:=\; Y_{ij} - \hat p_{ij},
\]
and, for a fixed model $j$, define the sample mean residual across items
\[
\bar e_{\cdot j} \;:=\; \frac{1}{n} \sum_{i=1}^n e_{ij}
\;=\;
\frac{1}{n} \sum_{i=1}^n (Y_{ij} - \hat p_{ij}).
\]

Then the population mean of the raw residual equals the population calibration error:
\[
\mathbb{E}[e_{ij}] 
\;=\; 
\mathbb{E}[Y_{ij} - \hat p_{ij}]
\;=\;
\mathbb{E}[p^{\star}_{ij} - \hat p_{ij}].
\]

Moreover, by the law of large numbers,
\[
\bar e_{\cdot j} \;\xrightarrow{p}\; \mathbb{E}[p^{\star}_{ij} - \hat p_{ij}]
\qquad\text{as } n\to\infty,
\]
so $\bar e_{\cdot j}$ is a consistent estimator of the (model-$j$) average calibration error across items.

Consequently, the \emph{debaised} residual
\[
\tilde e_{ij} \;:=\; e_{ij} - \bar e_{\cdot j}
\;=\; (Y_{ij} - \hat p_{ij}) - \frac{1}{n}\sum_{i'=1}^n (Y_{i'j} - \hat p_{i'j})
\]
satisfies
\[
\mathbb{E}[\tilde e_{ij}] \;\approx\; 0
\]
for large $n$, and, in the limit $n\to\infty$,
\[
\tilde e_{ij} \;\approx\; Y_{ij} - p^{\star}_{ij},
\]
i.e.\ it removes the model-specific calibration bias and leaves only the zero-mean Bernoulli fluctuation.

In particular, sample covariances (or correlations) computed from the debiased residuals $\tilde e_{ij}$ estimate the \emph{residual dependence} across models, rather than spurious correlations induced by shared calibration error.
\end{proposition}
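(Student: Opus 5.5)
The first claim is a one-line computation, the law of large numbers gives the second, and the approximation statements follow by combining them. The argument rests on one convention, which I make explicit because it is the delicate point: $\hat p_{ij}$ is treated as fixed (or, more generally, as independent of $Y_{ij}$), for instance computed from held-out data or regarded as a deterministic function of item covariates.

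\textbf{Step 1: the mean of the raw residual.} Since $Y_{ij}\in\{0,1\}$, we have $\mathbb E[Y_{ij}]=\Pr(Y_{ij}=1)=p^\star_{ij}$. Linearity of expectation then gives
\[
\mathbb E[e_{ij}]=\mathbb E[Y_{ij}]-\mathbb E[\hat p_{ij}]=\mathbb E[p^\star_{ij}-\hat p_{ij}].
\]
If $\hat p_{ij}$ is random but independent of $Y_{ij}$, I would condition on $\hat p_{ij}$ and apply the tower rule to get the same identity. This is the first claim.

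\textbf{Step 2: consistency of $\bar e_{\cdot j}$.} The rows $Y_i$ are i.i.d., and I assume the pairs $(Y_{ij},\hat p_{ij})$ are as well, e.g.\ because $\hat p_{ij}$ depends only on item-level features drawn i.i.d. The summands $e_{ij}$ then lie in $[-1,1]$, so they are bounded and i.i.d. across $i$ for fixed $j$. The weak law of large numbers gives $\bar e_{\cdot j}\to \mathbb E[e_{1j}]$ in probability. If instead $p^\star_{ij}$ and $\hat p_{ij}$ vary deterministically with $i$, the summands are independent and bounded but not identically distributed. In that case Chebyshev's inequality, with variance at most $1/n$, shows $\bar e_{\cdot j}-\frac1n\sum_i(p^\star_{ij}-\hat p_{ij})\to0$ in probability. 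I would state the result in this averaged form, which is what ``average calibration error across items'' means.

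\textbf{Step 3: the debiased residual.} Write
\[
\tilde e_{ij}=(Y_{ij}-p^\star_{ij})+(p^\star_{ij}-\hat p_{ij})-\bar e_{\cdot j}.
\]
The first term has mean zero. By Step 2, $\bar e_{\cdot j}$ converges to the average calibration error, so $\mathbb E[\tilde e_{ij}]\to0$, and for bounded variables convergence in probability implies convergence of means.

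The claim $\tilde e_{ij}\approx Y_{ij}-p^\star_{ij}$ holds exactly only when the calibration error $p^\star_{ij}-\hat p_{ij}$ is constant in $i$, i.e.\ when the error is a model-specific offset. That is the regime in which the proposition's informal conclusion is meant, so I would make it an explicit assumption. For covariances, the cross terms between $\tilde e_{\cdot j}$ and $\tilde e_{\cdot \ell}$ then reduce to sample covariances of the centered Bernoulli fluctuations plus $o_p(1)$ corrections. This uses Slutsky's theorem and the boundedness of all variables.

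\textbf{Main obstacle.} The difficulty is making ``$\approx$'' precise. I expect to need to state explicitly that debiasing removes only the component of miscalibration that is constant across items, and that calibration error varying across items would remain in the residual covariance.
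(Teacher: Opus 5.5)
Your proposal is correct and follows the same route as the paper. The paper gives no separate proof of this proposition; it justifies it inline in the statement by exactly your steps: linearity for $\mathbb{E}[e_{ij}]$, the law of large numbers for $\bar e_{\cdot j}$, and subtraction.

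Your caveat is the substantive addition, and it is right. Write $c_{ij}:=p^\star_{ij}-\hat p_{ij}$. The exact identity $\tilde e_{ij}-(Y_{ij}-p^\star_{ij})=\bigl(c_{ij}-\frac{1}{n}\sum_{i'}c_{i'j}\bigr)-\frac{1}{n}\sum_{i'}(Y_{i'j}-p^\star_{i'j})$ shows that $\tilde e_{ij}\approx Y_{ij}-p^\star_{ij}$ requires the calibration error to be constant across items, which the paper's ``consequently'' leaves implicit. The same identity shows that in your non-identically-distributed case even the claim $\mathbb{E}[\tilde e_{ij}]\approx 0$ from Step~3 needs that hypothesis.

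One smaller correction to where you locate the delicate point. Step~1 needs no independence between $\hat p_{ij}$ and $Y_{ij}$; linearity suffices. What Step~2 actually needs is independence across items, and in-sample fitting of $\hat p$, as in the paper's experiments, does not supply it.
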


\subsection{Proofs}
\paragraph{Proof of Theorem~\ref{thm:mixture}.}
\begin{proof}
Define a discrete probability measure $H^\star$ on the vertices
$\{0,1\}^m \subset [0,1]^m$ by placing mass $P(y)$ at the point $p=y$:
$H^\star(\{y\}) = P(y)$ for each $y \in \{0,1\}^m$.
Given $P_i \sim H^\star$, define the conditional law of $Y_i$ by
independent Bernoulli coordinates with success probabilities $P_{ij}$:
\[
\mathbb P(Y_{ij} = 1 \mid P_i = p) = p_j,
\qquad j=1,\dots,m.
\]
If $P_i = y \in \{0,1\}^m$, then each $Y_{ij}$ is almost surely equal to $y_j$,
so
\[
\mathbb P(Y_i = y \mid P_i = y) = \prod_{j=1}^m
y_j^{y_j}(1-y_j)^{1-y_j} = 1.
\]
Integrating over $H^\star$ gives, for any $y \in \{0,1\}^m$,
\[
\mathbb P(Y_i = y)
= \int \mathbb P(Y_i = y \mid P_i = p)\, dH^\star(p)
= \sum_{y' \in \{0,1\}^m} \mathbb I\{y'=y\} P(y')
= P(y).
\]
Thus $P$ admits the desired mixture representation with $H = H^\star$.
By construction, $H^\star$ is supported on at most $2^m$ points.
\end{proof}


\paragraph{Proof of Proposition~\ref{prop:monotone_absorption}.}
\begin{proof}
Fix $K$ and distinct $j\neq \ell$. Define the function $g_K:\{0,1\}^{n\times m}\to\mathbb R$ by
\[
g_K(y)\ :=\ \frac{1}{n}\sum_{i=1}^n (y_{ij}-p^{(K)}_{ij})(y_{i\ell}-p^{(K)}_{i\ell}).
\]
Since each $y_{ij}\in\{0,1\}$ and $p^{(K)}_{ij}\in[0,1]$, we have $|y_{ij}-p^{(K)}_{ij}|\le 1$, hence
\[
|g_K(y)| \le \frac{1}{n}\sum_{i=1}^n 1 = 1
\quad\text{for all } y,
\]
so $\|g_K\|_\infty \le 1$.

By definition of the null model in Section~\ref{sec:null_model}, for each item $i$,
\[
\E_{Q_K^\star}\!\Big[(Y_{ij}-p^{(K)}_{ij})(Y_{i\ell}-p^{(K)}_{i\ell})\Big]
=
\E_{Q_K^\star}[Y_{ij}-p^{(K)}_{ij}]\;\E_{Q_K^\star}[Y_{i\ell}-p^{(K)}_{i\ell}]
=0,
\]
because $\E_{Q_K^\star}[Y_{ij}-p^{(K)}_{ij}]=\Pr_{Q_K^\star}(Y_{ij}=1)-p^{(K)}_{ij}=0$.
Averaging over $i$ gives $\E_{Q_K^\star}[g_K(Y)] = 0$.

Therefore,
\[
\left|\E_{P}[g_K(Y)]\right|
=
\left|\E_{P}[g_K(Y)]-\E_{Q_K^\star}[g_K(Y)]\right|.
\]
Now use the standard total variation inequality: for any two probability laws $\mu,\nu$
on a finite measurable space and any $f$ with $\|f\|_\infty\le 1$,
\[
\big|\E_\mu[f]-\E_\nu[f]\big| \le 2\,\mathrm{TV}(\mu,\nu).
\]
Applying this with $\mu=P$, $\nu=Q_K^\star$, and $f=g_K$ yields
\[
\left|\E_{P}[g_K(Y)]\right| \le 2\,\mathrm{TV}(P,Q_K^\star).
\]
Finally, note that $\E_P[g_K(Y)]$ is exactly the left-hand side quantity:
\[
\E_P[g_K(Y)]
=
\frac{1}{n} \sum_{i = 1}^n\E_{P}\!\Big[(Y_{ij}-p^{(K)}_{ij})(Y_{i\ell}-p^{(K)}_{i\ell})\Big].
\]
The claimed bound follows, and since $\mathrm{TV}(P,Q_K^\star)\to 0$ by assumption, the left-hand side converges to $0$.
\end{proof}




\begin{remark}
\label{rem:empirical-residuals}
Theorem~\ref{thm:residual-dependence-vanishes} is stated in terms of the model-implied probabilities $p_{ij}^{(K)}$. In practice we use fitted probabilities $\hat p_{ij}^{(K)}$ obtained from the $K$-dimensional IRT fit. Under standard consistency of the estimator within $\mathcal N_K$, the sample residual cross-item correlations computed from $\hat p_{ij}^{(K)}$ converge in probability to the population quantities in the corollary, and thus shrink toward $0$ as $K$ grows.
\end{remark}

\paragraph{Proof of Proposition~\ref{prop:relativity}.}
\begin{proof}
The key observation is that the best-fitting null model is defined \emph{relative to the population} through the optimization problem \eqref{eq:proj-null}. Changing the item or model set changes both the feasible set $\mathcal N_r(I,J)$ and the objective being minimized.

Fix $(I,J)$ and suppose $Q^\star_{I,J}$ is a minimizer of $D(P_{I,J},Q)$ over $\mathcal N_r(I,J)$. By construction, the parameters $(U,V)$ underlying $Q^\star_{I,J}$ must jointly explain \emph{all} item--model interactions in $(I,J)$ using a shared $r$-dimensional representation. In particular, the fitted parameters for items in $I'$ and models in $J'$ are influenced by the presence of additional items $I\setminus I'$ and models $J\setminus J$ through this shared low-dimensional constraint.

Now consider fitting the null model directly on the restricted population $(I',J')$. The optimizer $Q^\star_{I',J'}$ is free to choose parameters that best explain $P_{I',J'}$ alone, without needing to simultaneously accommodate the behavior of items or models outside this subset. Unless the restriction of $Q^\star_{I,J}$ to $(I',J')$ already happens to be optimal for this smaller problem—which would require a special compatibility condition—there is no reason for the induced marginal $\bigl(Q^\star_{I,J}\bigr)\big|_{(I',J')}$ to coincide with $Q^\star_{I',J'}$.

Therefore, there exist populations $(I,J)$ and strict subsets $(I',J')$ for which
\[
\bigl(Q^\star_{I,J}\bigr)\big|_{(I',J')} \neq Q^\star_{I',J'},
\]
establishing that the null fit, and hence any residual-based notion of excess correlation, depends on the chosen population.
\end{proof}

\paragraph{Proof of Theorem~\ref{thm:heterogeneity}.}
\begin{proof}
Write perturbations
\[
\Delta U_i := U_i-U_i^\star,\qquad \Delta V_j := V_j-V_j^\star,
\qquad
\Delta p_{ij} := f(U_i,V_j)-p^\star_{ij}.
\]
Let $\|\Delta U\|_F^2 := \sum_{i\in I}\|\Delta U_i\|^2$ and
$\|\Delta V\|_F^2 := \sum_{j\in J}\|\Delta V_j\|^2$, and set
\[
\|(\Delta U,\Delta V)\|^2 := \|\Delta U\|_F^2+\|\Delta V\|_F^2.
\]

\paragraph{Regularity and the heterogeneity index.}
Recall (from the definition of $h$ in the text) that
\[
G_U^\star := \sum_{i\in I} U_i^\star (U_i^\star)^\top,\qquad
G_V^\star := \sum_{j\in J} V_j^\star (V_j^\star)^\top,
\qquad
h := \min\{\lambda_{\min}(G_U^\star),\lambda_{\min}(G_V^\star)\}.
\]
Under the ``standard regularity conditions'' assumed in the statement, we may fix a neighborhood
$\mathcal N$ of $(U^\star,V^\star)$ on which the following hold:
\begin{enumerate}
    \item[(R1)] $f$ is continuously differentiable and its partial derivatives are uniformly bounded and bounded away from $0$ along the fitted margins, i.e.\ there exists $m>0$ such that
    \[
    \bigl\|\nabla_u f(U_i,V_j)\bigr\|\ge m,\qquad \bigl\|\nabla_v f(U_i,V_j)\bigr\|\ge m
    \qquad \forall (U,V)\in\mathcal N,\ \forall i,j.
    \]
    \item[(R2)] The first-order Taylor remainder is uniformly quadratic: there exists $L<\infty$ such that for all $(U,V)\in\mathcal N$ and all $i,j$,
    \[
    \bigl|f(U_i,V_j)-f(U_i^\star,V_j^\star)
    -\langle \nabla_u f(U_i^\star,V_j^\star),\Delta U_i\rangle
    -\langle \nabla_v f(U_i^\star,V_j^\star),\Delta V_j\rangle\bigr|
    \le L\bigl(\|\Delta U_i\|+\|\Delta V_j\|\bigr)^2.
    \]
    \item[(R3)] The derivatives align with the fitted latent directions in the sense that there exist matrices
    $A_{ij},B_{ij}\in\mathbb R^{r\times r}$, uniformly well-conditioned on $\mathcal N$, such that
    \[
    \nabla_u f(U_i^\star,V_j^\star)=A_{ij}V_j^\star,\qquad
    \nabla_v f(U_i^\star,V_j^\star)=B_{ij}U_i^\star,
    \]
    and $\lambda_{\min}(A_{ij}^\top A_{ij})\ge m^2$, $\lambda_{\min}(B_{ij}^\top B_{ij})\ge m^2$ for all $i,j$.
    (This condition holds for standard IRT/MIRT links, where $f(u,v)=g(u^\top v)$ and $A_{ij}=B_{ij}=g'((U_i^\star)^\top V_j^\star)I$.)
\end{enumerate}
All constants below depend only on $(m,L)$ and the bounds implicit in (R1)--(R3), and not on $\varepsilon$.

\paragraph{Step 1: Reduce $\|\Delta p\|_2$ to the linearization.}
Define the linearized change
\[
\Delta p^{\lin}_{ij}:=
\langle \nabla_u f(U_i^\star,V_j^\star),\Delta U_i\rangle
+\langle \nabla_v f(U_i^\star,V_j^\star),\Delta V_j\rangle.
\]
By (R2),
\[
|\Delta p_{ij}-\Delta p^{\lin}_{ij}|\le L(\|\Delta U_i\|+\|\Delta V_j\|)^2.
\]
Summing squares and using $(a-b)^2\ge \tfrac12 a^2-b^2$ yields, for $(U,V)$ sufficiently close to $(U^\star,V^\star)$ (equivalently, for $\|(\Delta U,\Delta V)\|$ small enough),
\begin{equation}
\sum_{i,j}\Delta p_{ij}^2
\ \ge\
\tfrac12\sum_{i,j}(\Delta p^{\lin}_{ij})^2.
\tag{1}
\end{equation}
(Indeed, the quadratic remainder contributes $O(\|(\Delta U,\Delta V)\|^4)$, which is dominated by the linear term in a small neighborhood.)

\paragraph{Step 2: Lower bound the linear term using $h$.}
Using (R3), write
\[
\Delta p^{\lin}_{ij}
=
\langle A_{ij}V_j^\star,\Delta U_i\rangle
+
\langle B_{ij}U_i^\star,\Delta V_j\rangle.
\]
Apply $(a+b)^2\ge \tfrac14 a^2+\tfrac14 b^2$ to obtain
\begin{equation}
(\Delta p^{\lin}_{ij})^2
\ \ge\
\tfrac14\langle A_{ij}V_j^\star,\Delta U_i\rangle^2
+\tfrac14\langle B_{ij}U_i^\star,\Delta V_j\rangle^2.
\tag{2}
\end{equation}
Summing (2) over $(i,j)$ and using the uniform conditioning in (R3),
\[
\sum_{i,j}(\Delta p^{\lin}_{ij})^2
\ \ge\
\tfrac14 m^2\sum_{i,j}\langle V_j^\star,\Delta U_i\rangle^2
+
\tfrac14 m^2\sum_{i,j}\langle U_i^\star,\Delta V_j\rangle^2.
\]
Now,
\[
\sum_{j\in J}\langle V_j^\star,\Delta U_i\rangle^2
=
(\Delta U_i)^\top\Bigl(\sum_{j\in J}V_j^\star(V_j^\star)^\top\Bigr)\Delta U_i
=
(\Delta U_i)^\top G_V^\star \Delta U_i
\ \ge\
\lambda_{\min}(G_V^\star)\|\Delta U_i\|^2,
\]
and summing over $i$ gives
\[
\sum_{i,j}\langle V_j^\star,\Delta U_i\rangle^2
\ \ge\
\lambda_{\min}(G_V^\star)\sum_i\|\Delta U_i\|^2.
\]
Similarly,
\[
\sum_{i,j}\langle U_i^\star,\Delta V_j\rangle^2
\ \ge\
\lambda_{\min}(G_U^\star)\sum_j\|\Delta V_j\|^2.
\]
Combining these bounds and recalling the definition of $h$ yields
\begin{equation}
\sum_{i,j}(\Delta p^{\lin}_{ij})^2
\ \ge\
\tfrac14 m^2\, h\Bigl(\|\Delta U\|_F^2+\|\Delta V\|_F^2\Bigr)
=
\tfrac14 m^2\,h\,\|(\Delta U,\Delta V)\|^2.
\tag{3}
\end{equation}

\paragraph{Step 3: Conclude the shrinkage bound for $\mathcal E_\varepsilon$.}
Combining (1) and (3) gives, for $(U,V)$ sufficiently close to $(U^\star,V^\star)$,
\[
\sum_{i,j}\Delta p_{ij}^2
\ \ge\
\tfrac12\sum_{i,j}(\Delta p^{\lin}_{ij})^2
\ \ge\
\frac{m^2 h}{8}\,\|(\Delta U,\Delta V)\|^2.
\]
Now take any $(U,V)\in\mathcal E_\varepsilon\cap\mathcal N$. By definition,
$\sum_{i,j}\Delta p_{ij}^2\le \varepsilon^2$, hence
\[
\|(\Delta U,\Delta V)\|^2
\ \le\
\frac{8}{m^2 h}\,\varepsilon^2,
\qquad\text{so}\qquad
\|(\Delta U,\Delta V)\|
\ \le\
\frac{\sqrt{8}}{m\sqrt{h}}\,\varepsilon.
\]
Therefore the stated bound holds with $C=\sqrt{8}/m$ (and after quotienting by the invariances of the null model, as in the statement).
This proves that the diameter of $\mathcal E_\varepsilon$ in the parameter norm is $O(\varepsilon/\sqrt{h})$, and hence increasing $h$ monotonically shrinks $\mathcal E_\varepsilon$.
\end{proof}

\section{Experiment Details from Section~\ref{sec:unident_null}}
\subsection{Experiment 1 (Extensions)}
In order to summarize the pairwise covariance distribution, we compute the following summary statistics from $C$ for each $K$:
\[
\text{(i) AbsMax}^{(K)}:=\max_{j\neq \ell}\big|\widehat \Sigma^{(K)}_{j\ell}\big|,\qquad
\text{(ii) AbsMean}^{(K)}:=\frac{1}{m(m-1)}\sum_{j\neq \ell}\big|\widehat \Sigma^{(K)}_{j\ell}\big|,
\]
\[
\text{(iii) Median}^{(K)}:=\operatorname{median}_{j\neq \ell}\big|\widehat \Sigma^{(K)}_{j\ell}\big|,\qquad
\text{(iv)}\;\|\widehat \Sigma^{(K)}\|_{\mathrm{off},F}:=\sqrt{\sum_{j\neq \ell}\big(\widehat \Sigma^{(K)}_{j\ell}\big)^2}.
\]

We summarize our results in Figure~\ref{fig:off_diag}. As in Theorem~\ref{thm:residual-dependence-vanishes}, the pairwise residual correlations tend to zero as $K$ becomes arbitrarily large.
\begin{figure}
    \centering
    \includegraphics[width=\linewidth]{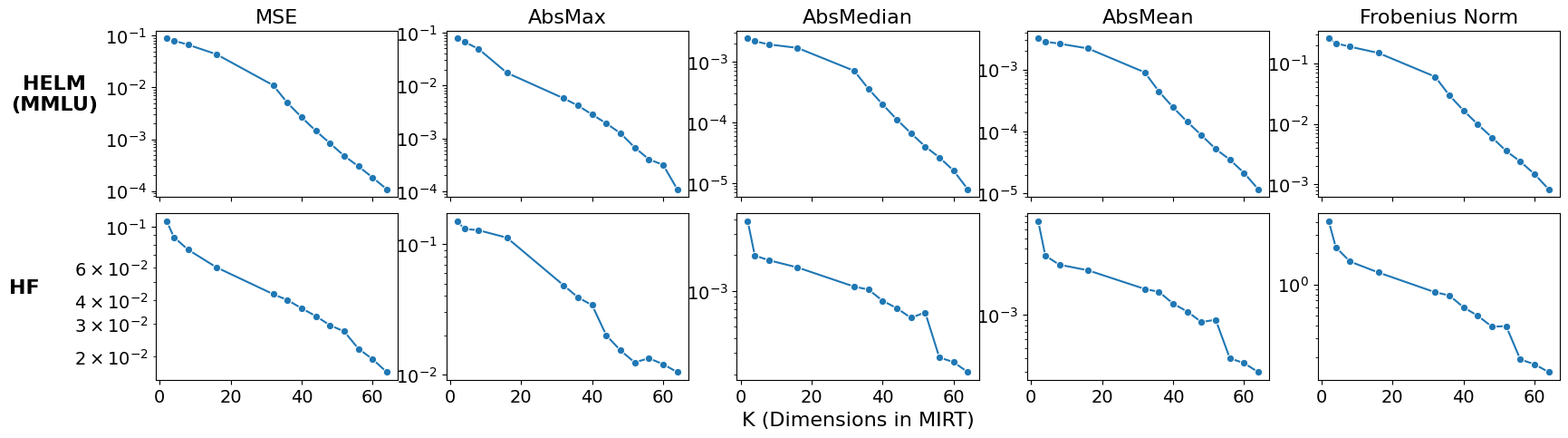}
    \caption{Mean square error (MSE) and summary statistics of the distribution of pairwise residual correlations, as a function of $K$: the dimensions in the multidimensional IRT model. Top (bottom) row shows results on the \textbf{HELM} (\textbf{HF}) dataset. As $K$ increases, the residual correlations that are unexplained by the independent null model tend toward zero, meaning that a sufficiently expressive null model can arbitrarily absorb model correlations.}
    \label{fig:off_diag}
\end{figure}

\subsection{Experiment 2 (Details)}\label{appsec:details_exp2}

\paragraph{Baselines from prior work.}
Both \citet{kimcorrelated2025} and \citet{goelgreat2025} operate on multiple-choice data.
Let $A_{ij}\in\{1,\dots,K_i\}$ denote the option selected by model $j$ on item $i$, with correct option $a_i^\star$.

\emph{Kim et al.} focus on agreement conditional on both models being incorrect:
\[
q_{j\ell}:=\Pr(A_{ij}=A_{i\ell}\mid A_{ij}\neq a_i^\star,\;A_{i\ell}\neq a_i^\star),
\]
and compare this quantity to the chance baseline
\[
q_{j\ell}^{\text{ch}} := \mathbb E\!\left[\frac{1}{K_i-1}\right],
\]
which corresponds to uniform random selection among the incorrect options.
Excess correlation is defined as the difference between these two quantities:
$$\text{Excess} := q_{j\ell} - q_{j\ell}^{\text{ch}}$$

$$\kappa_{j\ell}^{\text{err}} := \frac{q_{j\ell} - q_{j\ell}^{\text{ch}}}{1- q_{j\ell}^{\text{ch}}}$$

\emph{Goel et al.} define a probabilistic agreement measure based on the inner product of predictive distributions.
In the discrete (argmax-only) setting, their observed agreement reduces to
\[
c_{\mathrm{obs}}
\;=\;
\frac{1}{n}\sum_{i=1}^n \mathbf 1\{A_{ij}=A_{i\ell}\},
\]
while their chance agreement baseline is
\[
c_{\mathrm{exp}}
\;=\;
p_j p_\ell
\;+\;
(1-p_j)(1-p_\ell)\,\mathbb E\!\left[\frac{1}{K_i-1}\right],
\]
where $p_j$ denotes model $j$'s overall accuracy.
Their CAPA statistic is a normalized excess-over-chance measure,
\[
\mathrm{CAPA}_{j\ell}
\;=\;
\frac{c_{\mathrm{obs}}-c_{\mathrm{exp}}}{1-c_{\mathrm{exp}}}.
\]

Both baselines adjust for differences in model capability, but treat all items as exchangeable.

\paragraph{Our null: 1-dimensional IRT on multiple-choice correctness.}
We now evaluate excess correlation using a strictly more expressive null model.
Specifically, we fit a \emph{1-dimensional IRT model} ($K=1$) to the same multiple-choice data, where
\[
Y_{ij} := \mathbf 1\{A_{ij}=a_i^\star\}
\]
indicates whether model $j$ selected the correct option on item $i$.
The null model is
\[
\Pr(Y_{ij}=1)
\;=\;
\Phi(a_i \theta_j + b_i),
\]
with item-specific difficulty and discrimination parameters $(a_i,b_i)$ and model ability $\theta_j$.

This model strictly subsumes the baselines above: if all items share identical parameters, the IRT model reduces to a capability-only model, whereas allowing $(a_i,b_i)$ to vary introduces item heterogeneity.
Thus, in the language of Section~\ref{sec:absorption}, this represents a higher rung in the null ladder.

\paragraph{Residual dependence.}
After fitting the model, we compute residuals
\[
R_{ij}=Y_{ij}-\hat p_{ij},
\]
de-bias them as in Experiment~1, and estimate the residual covariance matrix
\[
\widehat C
\;=\;
\frac{1}{n}\tilde R^\top \tilde R.
\]
This quantity captures excess model-to-model dependence not explained by either model ability or item difficulty.

\subsubsection{Additional Results}\label{appsec:addl_results_previous_papers}
In Figure~\ref{fig:exp2_heatmaps}, we used a null model where each item is a (question, answer-choice) pair in order to provide a fair comparison to previous methods. In Figure~\ref{fig:exp2_heatmaps_addl}, we use a null model where each item is at the question level. The results do not change significantly, indicating that our findings remain robust to the choice of null model.

\begin{figure*}
    \centering
    \includegraphics[width=\linewidth]{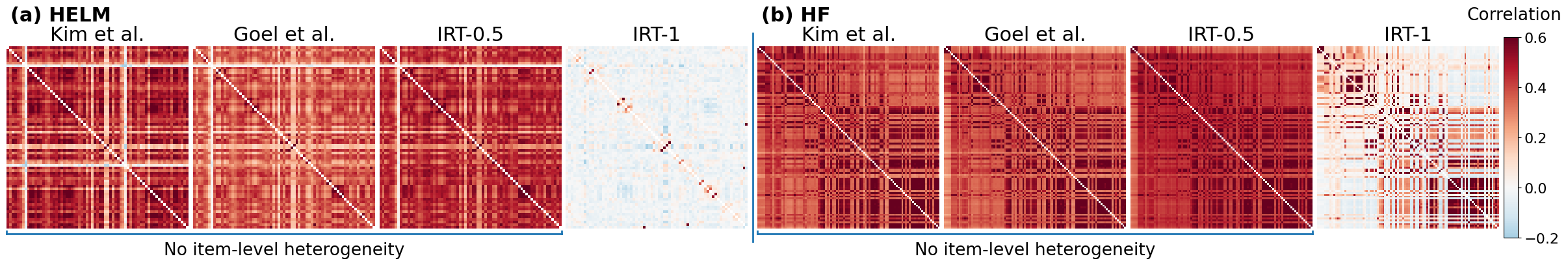}
    \caption{Residual correlation matrices for models in \textbf{HELM} (a) and \textbf{HF} (b), using different nulls: from left to right, a baseline from \citet{kimcorrelated2025}, from \citet{goelgreat2025}, from a 1D IRT with no item difficulty, and from a 1D IRT with item difficulties. Each item is a question. The first three null models \textbf{do not} include item heterogeneity. As such, the corresponding excess correlation for the full IRT model is attenuated compared to the others because item difficulties absorb much of the apparent positive correlation.}
    \vspace{-0.5em}
    \label{fig:exp2_heatmaps_addl}
\end{figure*}


\section{Extensions}
\subsection{Generalizing to other output spaces}\label{appsec:generalizations}
The framework accommodates a wide range of choices for $\mathcal Y$, including:
\begin{itemize}[leftmargin=2em]
\item[(1)] \textbf{Binary correctness:} $\mathcal Y = \{0,1\}$, where $Y_{ij}=1$ denotes a correct answer.
\item[(2)] \textbf{Probabilistic correctness:} $\mathcal Y = [0,1]$, where $Y_{ij}$ is a calibrated probability of correctness.
\item[(3)] \textbf{Logits or score vectors:} $\mathcal Y = \mathbb R^V$, such as the model’s pre-softmax logits over a vocabulary of size $V$.
\item[(4)] \textbf{Natural language:} $\mathcal Y = \mathcal T^\ast$, the space of token sequences.
\end{itemize}

Generalizing to probabilistic correctness is immediate: one simply replaces each Bernoulli
variable with a real value in $[0,1]$ and defines independence conditional on latent
parameters in the same manner. Richer output spaces (logits and natural language) require more care. For logits, one natural approach is to view each $Y_{ij}$ as a vector of real-valued features and study conditional independence of these vectors given latent parameters. For natural language, a possible strategy is to embed outputs via a mapping $f:\mathcal T^\ast \to \mathbb R^d$ (e.g.\ semantic or style embeddings) and treat the embedded representations $X_{ij}=f(Y_{ij})$ as the objects of analysis. In this case, an ``independent'' null model would posit that, conditional on latent structure, these embeddings are independent across models. However, the appropriate parameterization of such latent structure (e.g.\ global factors capturing topic, stylistic preferences, or discourse-level strategies) is not yet clear and presents an interesting direction for future work.

\subsection{Inferring $\Sigma$}\label{appsec:infer_sigma_details}

In this work, we define and calculate monoculture $\Sigma$ as the pairwise correlations of the residuals between the outcomes $Y$ and the null model fitted probabilities $\hat{p}$. This gives us a metric that is agnostic to the particular inference algorithm used to infer $\Sigma$. However, one may want to directly infer $\Sigma$ from the data, and in this section we provide one principles way of doing so.

\paragraph{Kernel parameterization and estimation of $\Sigma$.}

We first introduce a latent Gaussian variable $Z_{ij}$, where $Y_{ij} = \mathbb{I}[Z_{ij} \geq 0]$. We can therefore equivalently re-write the generative model as:

\begin{equation}
Z_{ij} \;=\; \theta_j - b_i + \varepsilon_{ij},
\qquad 
\varepsilon_{\cdot j} \sim \mathcal{N}(0,\Sigma),
\label{eq:1d-irt-alt}
\end{equation}
where for each item $i$ the vector $\varepsilon_{i\cdot} \in \mathbb{R}^M$ (across $M$ models) has covariance matrix $\Sigma \in \mathbb{R}^{M \times M}$, shared across items. 
The null model corresponds to $\Sigma = I_M$, i.e., conditionally independent models given $(\theta,b)$. Any deviation $\Sigma \neq I_M$ is interpreted as \emph{excess correlation} between models beyond what is induced by sharing the same item difficulties and model abilities.

In the alternative model~\eqref{eq:1d-irt-alt}, assume the excess-correlation matrix
$\Sigma \in \mathbb{R}^{M \times M}$ is parameterized via low-dimensional
embeddings for each model. Let $v_j \in \mathbb{R}^K$ denote the embedding for
model $j$, and let $V \in \mathbb{R}^{M \times K}$ collect these row-wise.
We define pairwise correlations by
\[
\rho_{jk} \;=\; v_j^\top v_k, \qquad j \neq k,
\]
and enforce unit variance by setting
\begin{equation}
\Sigma \;=\; VV^\top \;+\; \operatorname{diag}\!\bigl(1 - \|v_1\|_2^2,\dots,1 - \|v_M\|_2^2\bigr).
\label{eq:sigma-kernel}
\end{equation}
If we maintain $\|v_j\|_2 \le 1$ for all $j$, then $\Sigma$ is positive
semidefinite and has unit diagonal:
\[
\Sigma_{jj} \;=\; \|v_j\|_2^2 + (1 - \|v_j\|_2^2) \;=\; 1.
\]
In the implementation, we explicitly use
$\rho_{jk} = v_j^\top v_k$ for $j\neq k$ in the bivariate probit calculations
and fix $\Sigma_{jj} = 1$ via~\eqref{eq:sigma-kernel}.

\medskip
\noindent\textbf{Model-implied and empirical covariances.}
Let $Y \in \{0,1\}^{N \times M}$ be the correctness matrix (items $\times$ models),
and let $(\hat{\theta},\hat{b})$ be the 1D IRT estimates from Stage~1. Define
\[
\hat{\mu}_{ij} = \hat{\theta}_j - \hat{b}_i,
\qquad
\hat{p}_{ij} = \Phi(\hat{\mu}_{ij}),
\qquad
\hat{m}_j = \frac{1}{N}\sum_{i=1}^N \hat{p}_{ij},
\qquad
\bar{Y}_j = \frac{1}{N}\sum_{i=1}^N Y_{ij}.
\]

For a given pair $(j,k)$ and correlation $\rho_{jk}$, the latent pair
$(Z_{ij},Z_{ik})$ for item $i$ is modeled as
\[
\begin{pmatrix} Z_{ij} \\[2pt] Z_{ik} \end{pmatrix}
\sim
\mathcal{N}
\!\left(
\begin{pmatrix} \hat{\mu}_{ij} \\[2pt] \hat{\mu}_{ik} \end{pmatrix},
\begin{pmatrix}
1 & \rho_{jk} \\
\rho_{jk} & 1
\end{pmatrix}
\right).
\]
The model-implied joint success probability for item $i$ is
\[
p_{11}^{(i)}(j,k;\rho_{jk})
\;:=\;
\mathbb{P}\bigl(Y_{ij}=1, Y_{ik}=1\bigr)
\;=\;
\Phi_2\bigl(\hat{\mu}_{ij}, \hat{\mu}_{ik}; \rho_{jk}\bigr),
\]
where $\Phi_2(\cdot,\cdot;\rho)$ is the bivariate normal CDF with correlation
$\rho$. The model-implied covariance between $Y_{\cdot j}$ and $Y_{\cdot k}$ is
\begin{equation}
C^{\text{model}}_{jk}(V)
\;=\;
\frac{1}{N} \sum_{i=1}^N
p_{11}^{(i)}(j,k;\rho_{jk})
\;-\;
\hat{m}_j \hat{m}_k,
\qquad
\rho_{jk} = v_j^\top v_k.
\label{eq:C-model}
\end{equation}
The empirical covariance is
\begin{equation}
C^{\text{emp}}_{jk}
\;=\;
\frac{1}{N}\sum_{i=1}^N Y_{ij}Y_{ik}
\;-\;
\bar{Y}_j \bar{Y}_k.
\label{eq:C-emp}
\end{equation}
For each iteration, the implementation computes these quantities in a
streaming/tiled fashion over items and model blocks, but conceptually they are
just~\eqref{eq:C-model}–\eqref{eq:C-emp}.

\medskip
\noindent\textbf{Loss and gradients in correlation space.}
We fit $V$ by matching model-implied and empirical covariances off the diagonal.
The loss is
\begin{equation}
\mathcal{L}(V)
\;=\;
\sum_{1 \le j < k \le M}
\bigl(C^{\text{model}}_{jk}(V) - C^{\text{emp}}_{jk}\bigr)^2.
\label{eq:loss-sigma}
\end{equation}
To differentiate with respect to $\rho_{jk}$, note that
\[
\frac{\partial}{\partial \rho_{jk}}
p_{11}^{(i)}(j,k;\rho_{jk})
\;=\;
\varphi_2\bigl(\hat{\mu}_{ij},\hat{\mu}_{ik};\rho_{jk}\bigr),
\]
where $\varphi_2$ is the bivariate normal PDF with correlation $\rho_{jk}$.
Hence
\[
\frac{\partial C^{\text{model}}_{jk}}{\partial \rho_{jk}}
\;=\;
\frac{1}{N}\sum_{i=1}^N
\varphi_2\bigl(\hat{\mu}_{ij},\hat{\mu}_{ik};\rho_{jk}\bigr)
\;=:\;
\bar{\varphi}_{jk},
\]
and by the chain rule,
\begin{equation}
\frac{\partial \mathcal{L}}{\partial \rho_{jk}}
\;=\;
2\bigl(C^{\text{model}}_{jk} - C^{\text{emp}}_{jk}\bigr)
\bar{\varphi}_{jk}.
\label{eq:dL-drho}
\end{equation}
In the code, $C^{\text{model}}_{jk} - C^{\text{emp}}_{jk}$ is stored as
\texttt{diff\_blk}, and $\bar{\varphi}_{jk}$ as \texttt{phibar\_blk}, so that
\[
G_{jk}
\;:=\;
\frac{\partial \mathcal{L}}{\partial \rho_{jk}}
\;=\;
2\,\texttt{diff\_blk}_{jk}\,\texttt{phibar\_blk}_{jk}.
\]

\medskip
\noindent\textbf{Gradients in embedding space and optimization.}
Because $\rho_{jk} = v_j^\top v_k$ for $j\neq k$, we have
\[
\frac{\partial \rho_{jk}}{\partial v_j} = v_k,
\qquad
\frac{\partial \rho_{jk}}{\partial v_k} = v_j,
\]
and thus
\begin{equation}
\frac{\partial \mathcal{L}}{\partial v_j}
\;=\;
\sum_{k\neq j}
\frac{\partial \mathcal{L}}{\partial \rho_{jk}}
\frac{\partial \rho_{jk}}{\partial v_j}
\;=\;
\sum_{k\neq j} G_{jk} \, v_k.
\label{eq:dL-dvj}
\end{equation}
In matrix form, if $G$ is the matrix with entries $G_{jk}$ (symmetric and
zero on the diagonal), then the gradient w.r.t.\ $V$ is
\[
\nabla_V \mathcal{L}
\;=\;
G V,
\]
which is exactly what the code accumulates in the tensor \texttt{Z}.
We then perform Adam updates on $V$:
\[
V \leftarrow \operatorname{AdamStep}\bigl(V,\nabla_V \mathcal{L}\bigr),
\]
with bias-corrected learning rate as in standard Adam.

\medskip
\noindent\textbf{Norm projection and unit variance.}
After each Adam update, we project each row $v_j$ onto the unit ball:
\begin{equation}
v_j \;\leftarrow\; 
\frac{v_j}{\max\{1,\|v_j\|_2\}}.
\label{eq:projection}
\end{equation}
This ensures $\|v_j\|_2 \le 1$ for all $j$, so that the covariance matrix
$\Sigma$ in~\eqref{eq:sigma-kernel} remains positive semidefinite with unit
diagonal. Conceptually, we are learning a low-rank correlation structure
$\rho_{jk} = v_j^\top v_k$ subject to the constraint that all variances are
fixed at one, which preserves the scaling of the underlying 1D IRT model.

\medskip
\noindent\textbf{Practical details.}
The implementation evaluates the loss~\eqref{eq:loss-sigma} and its gradients
exactly but does so in a memory-efficient way by:
(i) streaming over items in chunks (size \texttt{item\_chunk}) and
(ii) tiling the model index set into blocks (size \texttt{model\_block}) to
compute $C^{\text{model}}_{jk}$ and $C^{\text{emp}}_{jk}$ blockwise.
We use early stopping based on the best observed loss over iterations.

\textbf{Justification for two-stage estimation.}
This two-stage procedure separates \emph{first-order} and \emph{second-order} structure: Stage~1 uses only the marginal success probabilities to learn $(\theta,b)$, while Stage~2 uses the residual covariances to learn $\Sigma$. Under standard regularity conditions, the Stage~1 maximum-likelihood estimates $(\hat{\theta},\hat{b})$ are consistent and $\sqrt{N}$-accurate even if $\Sigma \neq I_M$, because the marginal Bernoulli probabilities depend only on $(\theta,b)$ and not on $\Sigma$. Plugging these into the second-stage moment-matching objective yields a consistent estimator of $\Sigma$ that is asymptotically equivalent to a joint estimator but computationally simpler. Conceptually, this is analogous to generalized method-of-moments procedures that estimate parameters associated with lower-order moments before fitting higher-order structure.


\end{document}